\newtheorem{property}{Property}[section]
\theoremstyle{plain}
\newtheorem{theorem}{Theorem}[section]
\theoremstyle{remark}
\newcommand{\htree}{\hq}
\newcommand{\htreename}{tree hierarchy}
\newcommand{\otree}{\preceq_H}
\newcommand{\ch}{\hu}
\newcommand{\dch}{\text{DH}_U}
\newcommand{\chname}{update hierarchy}
\newcommand{\lab}{L}
\newcommand{\sgd}{d_{\ch}^{[\!w,v]}}
\newcommand{\hq}{H_Q}
\newcommand{\subtree}{T_\downarrow}
\newcommand{\hu}{H_U}
\newcommand{\om}{DHL}
\newcommand{\hh}{H2H-Index}
\newcommand{\ou}{\preceq_H}
\newcommand{\anc}{anc}
\newcommand{\desc}{desc}
\newcommand{\nup}{N^+}
\newcommand{\ndown}{N^-}
\newcommand*{\lidx}[1]{\tau({#1})}
\gdef\@copyrightpermission{
   \href{https://creativecommons.org/licenses/by/4.0/}{\includegraphics[width=0.10\textwidth]{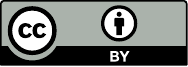}}\\
   \href{https://creativecommons.org/licenses/by/4.0/}{This work is licensed under a Creative Commons Attribution International 4.0 License.}
  \vspace{1pt}
}
\begin{document}

\title{Dual-Hierarchy Labelling: Scaling Up Distance Queries on Dynamic Road Networks}


\author{Muhammad Farhan}
\affiliation{%
  \institution{Australian National University}
  \city{Canberra}
  \country{Australia}}
\email{muhammad.farhan@anu.edu.au}

\author{Henning Koehler}
\affiliation{%
  \institution{Massey University}
  \city{Palmerston North}
  \country{New Zealand}}
\email{h.koehler@massey.ac.nz}

\author{Qing Wang}
\affiliation{%
  \institution{Australian National University}
  \city{Canberra}
  \country{Australia}}
\email{qing.wang@anu.edu.au}

\begin{abstract}
Computing the shortest-path distance between any two given vertices in road networks is an important problem. A tremendous amount of research has been conducted to address this problem, most of which are limited to static road networks. Since road networks undergo various real-time traffic conditions, there is a pressing need to address this problem for dynamic road networks. Existing state-of-the-art methods incrementally maintain an indexing structure to reflect dynamic changes on road networks. However, these methods suffer from either slow query response time or poor maintenance performance, particularly when road networks are large. In this work, we propose an efficient solution \emph{Dual-Hierarchy Labelling (DHL)} for distance querying on dynamic road networks from a novel perspective, which incorporates two hierarchies with different but complementary data structures to support efficient query and update processing. Specifically, our proposed solution is comprised of three main components: \emph{query hierarchy}, \emph{update hierarchy}, and \emph{hierarchical labelling}, where \emph{query hierarchy} enables efficient query answering by exploring only a small subset of vertices in the labels of two query vertices and \emph{update hierarchy} supports efficient maintenance of distance labelling under edge weight increase or decrease. We further develop dynamic algorithms to reflect dynamic changes by efficiently maintaining the update hierarchy and hierarchical labelling. We also propose a parallel variant of our dynamic algorithms by exploiting labelling structure which aligns well with parallel processing. We evaluate our methods on 10 large road networks and it shows that our methods significantly outperform the state-of-the-art methods, i.e., achieving considerably faster construction and update time, while being consistently 2-4 times faster in terms of query processing and consuming only 10\%-20\% labelling space.
\end{abstract}

\maketitle

\section{Introduction}
Typically, a road network is modeled as a dynamic weighted graph $G=(V, E, \omega)$, where vertices $V$ represent intersections, edges $E$ represent roads between intersections, and $\omega$ assigns weights to edges, such as travel time. In a dynamic road network, it is commonly assumed that vertices and edges remain intact, while the weights of edges may continuously change due to real-time traffic conditions, such as traffic congestion, accidents, or road closures. Given any arbitrary pair $(s,t)$ of vertices, the \emph{distance querying} problem is to answer the up-to-date shortest-path distance between $s$ and $t$ under these dynamic changes. 

As a fundamental building block, distance querying plays a crucial role in road applications, such as GPS navigation \cite{goldberg2005computing}, route planning \cite{fan2010improvement}, traffic monitoring \cite{kriegel2007proximity}, and POI recommendation \cite{yawalkar2019route}. These applications often demand low-latency responses due to the need of computing thousands to millions of distance queries per second, as part of more complex tasks, which itself needs to be solved frequently. Examples include matching taxi drivers with passengers, optimizing delivery routes with multiple pick up and drop off points that can change dynamically, or providing recommendation on k-nearest POIs to customers. For instance, real-time navigation systems like Google Maps and Waze use preprocessing techniques such as contraction hierarchies~\cite{geisberger2008contraction} to accelerate the computation of shortest-path distance queries, enabling optimal route suggestions based on current traffic conditions. These conditions are obtained through various sources, such as crowdsourcing (where data like speed and location is collected from users who have the app open on their devices), road sensors, local transportation authorities, and real-time reports from users. This multifaceted approach ensures that traffic data is updated as frequently as possible, often multiple times per minute \cite{zheng2010understanding,ouyang2020efficient}. In a similar vein, companies offering ride-hailing services like Uber and Lyft rely on millions of real-time distance queries to efficiently match drivers with passengers and minimize wait times \cite{zhang2021efficient,huang2021learning}. To meet user demands, these systems commonly use a hybrid client-server model. Complex real-time computations and traffic updates are handled by the server, while simpler tasks are managed on the client side. This allows for efficient and responsive navigation, even when the internet connection is weak or unavailable.

\begin{figure*}
\centering
\includegraphics[width=\textwidth]{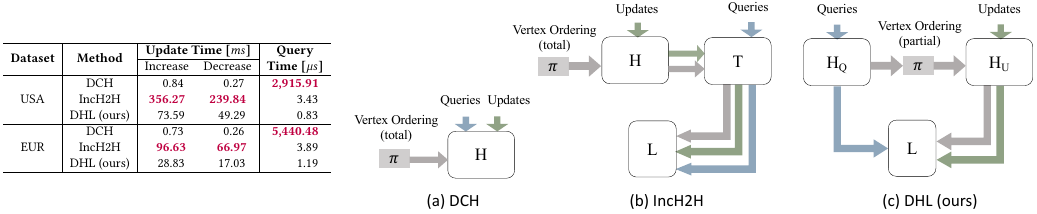}
\vspace{-0.2cm}
\caption{A high-level comparison of our proposed method DHL with state-of-the-art methods: (a) DCH~\cite{ouyang2020efficient} and (b) IncH2H~\cite{zhang2022relative}. DCH processes both queries and updates based on CH (i.e., $H:= CH$), IncH2H extends DCH to construct another tree hierarchy $T$ to speed up queries, and DHL uses a tree hierarchy as $H_Q$ and CH as $H_U$ to process queries and updates, separately; $L$ refers to a distance labelling. The grey, green and blue arrows indicate the processes of construction, updating, and querying, respectively.}
\label{fig:our_frameworks}\vspace{-0.2cm}
\end{figure*}

\vspace{0.1cm}
\noindent\textbf{Present work.~}
In this paper, we study the distance querying problem in dynamic road networks, aiming to maintain accurate distance information efficiently while still achieving superior query performance.  
\emph{This task is challenging, as query time and update time are typically considered a trade-off, where improving one often comes at the expense of the other.} 

Previously, several methods~\cite{geisberger2012exact,ouyang2020efficient,wei2020architecture,zhang2021dynamic,zhang2022relative,chen2021p2h} have been developed to incrementally maintain precomputed information, ensuring fast and correct responses to distance queries. These methods generally fall into two main categories:~\begin{itemize}
    \item[(1)] \emph{Shortcut-based methods}~\cite{geisberger2012exact,ouyang2020efficient,wei2020architecture}, which focus on maintaining precomputed auxiliary edges, referred to as \emph{shortcuts}, between vertices by applying vertex contraction in terms of a given vertex ordering, 
e.g., contraction hierarchy (CH)~\cite{geisberger2008contraction}. Such shortcuts help reduce query search space. 
\item[(2)] \emph{Hub-based labelling methods}~\cite{zhang2021dynamic,zhang2022relative,chen2021p2h}, which maintain distance labelling in addition to shortcuts so that distance queries can be answered by searching distance labels directly. However, these methods require constructing an additional tree decomposition based on CH. 
\end{itemize} In a nutshell,
shortcut-based methods require less space and shorter update time, but query time can be orders of magnitude longer than hub-based labelling methods. In contrast, hub-based labelling methods answer distance queries significantly faster at the cost of requiring
additional space for storing distance labels and longer update time
due to the combined maintenance of distance labels and shortcuts. The table in Figure~\ref{fig:our_frameworks} presents the empirical results on two large real-world road networks: one from USA~\cite{demetrescu2009shortest} and the other from Europe~\cite{ptvplanung}.

To tackle the challenge of efficiently maintaining accurate distance information without compromising query performance, we make two crucial observations: 

\begin{itemize}
    \item \textbf{Separation of concerns}: Queries and updates are distinct types of operations. Therefore, different data structures can be employed to efficiently support distance queries and updates separately.
    \item \textbf{Vertex ordering}: Finding a vertex ordering based on the centrality of vertices in a road network plays a vital role in accelerating the performance of both queries and updates.
\end{itemize}
Previous works either use the same data structure for both queries and updates (e.g., DCH~\cite{ouyang2020efficient}) or use two data structures where the one for queries depends on the other to reflect updates (e.g., IncH2H~\cite{zhang2022relative}). Further, these methods, including DCH and IncH2H, all utilize a total vertex ordering based on general heuristics, such as minimum degree~\cite{berry2003minimum}, without considering the specific properties of a road network. Figure~\ref{fig:our_frameworks}(a)-(b) presents a high-level description of the methods DCH and IncH2H.

These observations guide us to develop a novel framework, {\emph{Dual-Hierarchy Labelling} (DHL)}, which aims to balance the trade-off between query time and update time while maintaining efficient and accurate distance querying in dynamic road networks. As depicted in Figure~\ref{fig:our_frameworks}(c), the {DHL} framework has three cornerstones: $(\langle \hq, \hu \rangle, L)$, where $\langle \hq, \hu \rangle$ are two hierarchies integrated through {a vertex partial ordering $\pi$}, and $L$ is a hierarchical labelling satisfying the 2-hop cover property~\cite{cohen2003reachability}.  
More specifically, $\hq$ is a static \emph{query hierarchy} designed to accelerate the efficiency of distance queries, while $\hu$ is a dynamic \emph{update hierarchy} aimed at enhancing the efficiency of distance maintenance in dynamic road networks.
Both $\hq$ and $\hu$ respect the partial order $\pi$ on vertices. Unlike previous work, the vertex ordering $\pi$ in our method is obtained by ordering vertices in terms of their occurrences in the minimum cuts of recursive partitions of a road network, reflecting their centrality in the entire network. In doing so, $\hq$ supports \emph{efficient querying} by reducing the search space of labels when using $L$, while $\hu$ supports \emph{efficient maintenance} under dynamic changes, ensuring that $L$ dynamically maintains correct distance information. Hence, distance queries are answered via $\hq$ and $L$, while dynamic changes are maintained via $\hu$ and $L$.

Based on the {DHL} framework, we further explore the following:

\begin{itemize}[leftmargin=*]
    \item \emph{\underline{ Design Choices:}~}~We instantiate the DHL framework by employing a tree hierarchy for $\hq$ and a variant of the contraction hierarchy for $\hu$. Our design addresses the inherent limitations of existing shortcut-based and hub-based methods, such as DCH and IncH2H, by extracting a vertex partial ordering from $\hq$ to construct $\hu$. $L$ is designed to consist of a \emph{distance schema} $\Gamma$ determined by $\hq$ and a \emph{distance map} $\gamma$ determined by $\hu$. The distance schema $\Gamma$ captures the tree-like structural information of labelling in terms of $\hq$, while the distance map $\gamma$ maintains distance information efficiently by propagating dynamic changes through $\hu$. Furthermore, our hierarchical labelling $L$ preserves distances within subgraphs rather than the entire graph, while still exhibiting the 2-hop cover property for query efficiency. Restricting distances to subgraphs has the distinct advantage that a weight update only affects distance entries in the subgraph where the updated edge lies, leading to faster maintenance.\vspace{0.1cm}

    \item \emph{\underline{Dynamic Algorithms:}~} We propose dynamic algorithms that can efficiently maintain the update hierarchy $\hu$ and the hierarchical labelling $L$ to reflect edge weight decrease and increase on road networks. Specifically, under dynamic changes, we first maintain $\hu$ to find affected shortcuts and then use these shortcuts as a starting point to update distance entries via the distance map $\gamma$. By exploring the structure of hierarchical labelling $L$, we also propose parallel variants of our dynamic algorithms which further improve maintenance performance by means of parallelizing searches w.r.t. multiple ancestors.\vspace{0.1cm}

    \item \emph{\underline{Theoretical Analysis}:~} We conduct theoretical analysis to prove the correctness of our algorithms through establishing a connection between shortcuts in $\hu$ and distance entries in $L$ based on $\hq$. We also theoretically prove that restricting distances stored in the hierarchical labelling $L$ to subgraphs, rather than distances in the entire graph, does not affect the 2-hop cover property of $L$. Furthermore, we analyse the complexity bounds of our dynamic algorithms for both edge increase and edge decrease. 
\end{itemize}
We have conducted extensive experiments to evaluate the performance of our algorithms on 10 real-world large road networks, including the whole road network of USA and western Europe road network. The results show that our algorithms consistently outperforms the state-of-the-art method IncH2H on all of these road networks in both query time and update time. Figure~\ref{fig:our_frameworks} presents some results on two largest datasets (refer to Section~\ref{performance} for details). In general, our method is about 3-4 times faster than IncH2H in terms of update time, while being 2-4 times faster in terms of query processing and consuming only 10\%-20\% labelling space. 
\section{Related Work}
The classical approach for computing the shortest-path distance between a pair of vertices is to use Dijkstra's algorithm \cite{tarjan1983data}, which has a time complexity of $O(|E|+|V|log|V|)$ and may traverse an entire network to answer queries for vertices that are far apart from each other. Specifically, given a source vertex $s$ and a target vertex $t$, Dijkstra's algorithm~\cite{tarjan1983data} traverses vertices in ascending order of their distances from $s$ until $t$ is reached. Bidirectional Dijkstra’s algorithm~\cite{wu2012shortest}, a variation of Dijkstra’s algorithm, performs two Dijkstra's searches simultaneously from $s$ and $t$, traversing vertices in ascending order of their distances to $s$ and $t$, respectively. A* algorithm \cite{hart1968formal} estimates for each visited vertex $v$ the heuristic distance value from $v$ to the target vertex $t$ and use it as the searching guidance. A* algorithm reduces Dijkstra’s search space to a smaller conceptual ellipse \cite{bast2016route}. Since search-based methods do not require any precomputed information and have no construction or maintenance cost, they are naturally adopted to dynamic road networks. However, these methods are very inefficient in query processing and vulnerable to large search space, particularly on large road networks (e.g., millions of vertices). 

Below, we focus on discussing two lines of work that leverage and maintain precomputed information for answering distance queries on dynamic road networks. 

\vspace{0.2cm}
\noindent\emph{\underline{Shortcut-Based Methods}.~}~
To improve query time, which is crucial for real-world applications, several works \cite{ouyang2020efficient,geisberger2012exact,geisberger2008contraction,zhu2013shortest,wei2020architecture} precompute shortcuts between vertices. \emph{Contraction hierarchy} (CH)~\cite{geisberger2008contraction} is the most widely used method in this direction. Given a pre-defined vertex order $\tau$, CH performs vertex contraction on each vertex $v$ one by one following $\tau$ to add shortcuts among its neighbors which preserve their distance information through $v$. Later, 
CH was extended to the dynamic setting  \cite{ouyang2020efficient,wei2020architecture,geisberger2012exact}, which can be divided into two categories: (1) \emph{vertex-centric methods}; (2) \emph{shortcut-centric methods}. The vertex-centric methods \cite{geisberger2012exact} first identify affected vertices and then re-contract them using the original vertex order to update shortcuts. The shortcut-centric methods \cite{ouyang2020efficient,wei2020architecture} first identify affected shortcuts and then update the weights of these shortcuts. The former requires the minimality of shortcuts during recontraction under the \emph{shortest distance constraint}~\cite{geisberger2008contraction}, which is highly inefficient because new or old shortcuts may need to be added or removed accordingly. The latter allows redundant shortcuts to avoid insertion or deletion of shortcuts during maintenance and updates only the weights of affected shortcuts due to the \emph{minimum weight property}~\cite{ouyang2020efficient}. However, the shortcut-centric methods may add much more shortcuts than the vertex-centric methods, making it difficult to scale to graphs with large treewidth due to potentially extremely dense structures.

\vspace{0.2cm}
\noindent\emph{\underline{Hub-Based Labelling Methods}.~} Recent works \cite{zhang2021dynamic,zhang2022relative,chen2021p2h} are mostly hub-based labelling methods, which precompute \emph{distance labels} that capture distance information between all pairs of vertices in $G$. Then, only distance labels are searched at query time to compute distances. These works are typically built upon \emph{hierarchical 2-hop index} (H2H-Index) \cite{ouyang2018hierarchy} which can reduce search to a subset of labels $L(s)$ and $L(t)$ for two querying vertices $s$ and $t$ by exploiting a vertex hierarchy using tree decomposition. 

Specifically, they extend H2H-Index to dynamic road networks by incrementally maintaining H2H-Index to reflect edge weight updates on $G$.  DynH2H~\cite{chen2021p2h} is the first dynamic algorithm which maintains H2H-Index in order to efficiently answer distance queries under dynamic changes on road networks. Zhang et al.~\cite{zhang2021dynamic} propose \emph{dynamic tree decomposition based hub labelling} (DTDHL), an optimized version of DynH2H. Nonetheless, DTDHL may take seconds even for a single  change, failing to scale to large road networks. 
Recently, Zhang et al. \cite{zhang2022relative} studied the theoretical boundedness of dynamic CH index \cite{ouyang2020efficient,wei2020architecture} and dynamic H2H index \cite{chen2021p2h,zhang2021dynamic} for edge weight increase and decrease separately. IncH2H has shown to achieve the state-of-the-art performance on dynamic road networks. However, it suffers from maintaining a huge index which is constructed upon CH using a total ordering of vertices based on \emph{minimum degree heuristic} \cite{berry2003minimum}. 
\section{State-of-the-Art Solutions}
Let $G=(V, E, \omega)$ be a road network where $V$ is a set of vertices and $E\subseteq V\times V$ is a set of edges. Each edge $(u, v) \in E$ is associated with a non-negative weight $\omega(u, v) \in \mathbb{R}_{\geq 0}$. A simple path $p$ is a sequence of distinct vertices $(v_1, v_2, \dots, v_k)$ where $(v_{i}, v_{i+1}) \in E$ for each $i\in[1, k)$. The weight of a path $p$ is defined as $\omega(p) = \sum_{i=1}^{k - 1} \omega(v_i, v_{i+1})$. A shortest path $p$ between two arbitrary vertices $s$ and $t$ is a path starting at $s$ and ending at $t$ such that $\omega(p)$ is minimised. The \emph{distance} between $s$ and $t$, denoted as $d_G(s,t)$, is the weight of any shortest path between $s$ and $t$. 
We use $N(v)$ to denote the set of neighbors of a vertex $v \in V$, i.e. $N(v) = \{(u, \omega(u, v)) \mid u\in V, (u, v) \in E \}$, and 
$V(G)$ and $E(G)$ the set of vertices and edges in $G$, respectively.
We assume that $G$ is undirected (unless explicitly indicated otherwise) and treat dynamic changes on $G$ as edge weight updates~\cite{ouyang2018hierarchy,zhang2022relative}.

\begin{figure}
\centering
\includegraphics[width=0.4\textwidth]{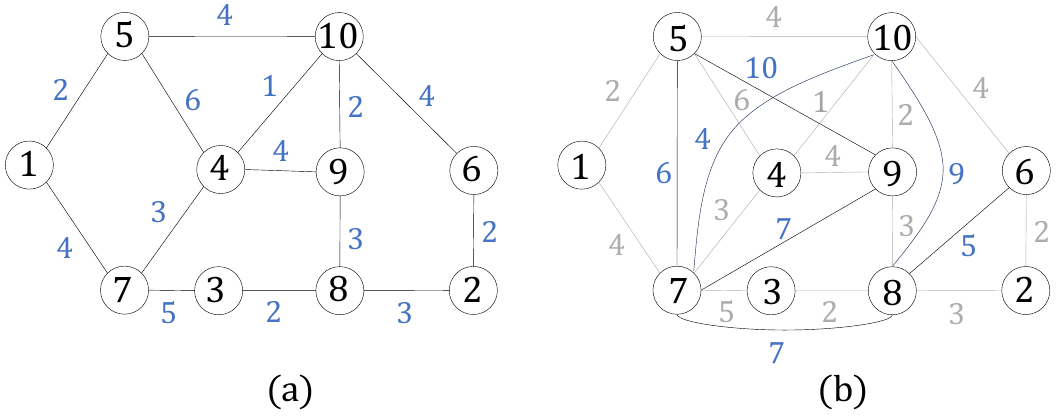}\vspace{-0.2cm}
\caption{(a) A road network $G$; (b) A shortcut graph $S_G$ of $G$.}
\label{fig:ch}\vspace{-0.2cm}
\end{figure}

Below, we discuss two state-of-the-art methods for dynamic road networks: (1) Dynamic Contraction Hierarchy (DCH)~\cite{ouyang2020efficient}, and (2) Incremental Hierarchical 2-Hop (IncH2H)~\cite{zhang2022relative}.

\subsection{Dynamic Contraction Hierarchy}\label{subsec:sota-dynamic}

\emph{Dynamic Contraction Hierarchy} (DCH) \cite{ouyang2020efficient} is built upon a variant \cite{geisberger2012exact} of the original Contraction Hierarchy (CH) \cite{geisberger2008contraction}. This variant ensures that the presence of shortcuts is weight-independent, albeit at the cost of adding more shortcuts compared to the original CH. This trade-off significantly improves update performance.

\vspace{0.1cm}
\noindent\emph{\underline{Indexing}.~} Given a graph $G$ and a total ordering $\pi$ over $V(G)$, DCH constructs a shortcut graph $S_G$ of $G$ by adding shortcuts as follows. A shortcut $(v_1, v_k)$ is added between two vertices $v_1$ and $v_k$ if and only if $G$ contains a valley path \cite{wei2020architecture}. A \emph{valley path} is defined to be a simple path $p = (v_1, v_2, \dots, v_k)$ satisfying $\pi(v_i) < \min\{\pi(v_1), \pi(v_k)\}$ for $\forall 1 < i < k$. In other words, the rank of any intermediate vertex of $p$ is lower than the rank of its endpoints. The weight of the shortcut $(v_1, v_k)$ is defined as the weight of the valley path $p$, i.e., $\omega(v_1,v_k)=\omega(p)$.

For each vertex $v$, let $N^+(v)$ be its upward neighbors in $S_G$ which are ranked higher than $v$, i.e., $N^+(v) = \{u \mid (u,v)\in E(S_G) \wedge \pi(u) > \pi(v)\}$, and $N^-(v)$ be its downward neighbors in $S_G$ which are ranked lower than $v$, i.e., $N^-(v) = \{u \mid (u,v)\in E(S_G) \wedge \pi(u) < \pi(v)\}$. Shortcuts in DCH satisfy the \emph{minimum weight property}~ \cite{ouyang2020efficient}.

\begin{property}\label{lab:weight_property} For any $(u,v)\in E(S_G)$, the following  holds:
\begin{align}\label{eq:weight_property}
\omega(u,v)=\min\{{\omega_G(u,v)}, \omega(x, u) + \omega(x, v) \nonumber\mid \\x\in N^-(u)\cap N^-(v) \}.
\end{align} 
\end{property}

Here $\omega_G(u,v)$ denotes the weight of edge $(u,v)$ in $G$ if it exists, or $\infty$ otherwise.
Note that every edge in $G$ is a valley path, and thus a shortcut in $S_G$, and that its weight in $G$ and $S_G$ may differ.

\vspace{0.1cm}
\noindent\emph{\underline{Updates and Queries}.~} DCH maintains shortcuts affected by edge weight decrease and increase separately. Let $(u,v)\in E(G)$ be an updated edge with $\pi(u)<\pi(v)$. For each $x\in N^+(u)$, if the weight of $(u,v)$ is decreased and $\omega( v,x)>\omega(u,v)+\omega(u,x)$, then the weight of shortcut $\omega(v,x)$ is updated to $\omega(u,v)+\omega(u,x)$; if the weight of $(u,v)$ is increased and $\omega(v,x)=\omega(u,v)+\omega(v,x)$, then the weight of shortcut $\omega(v,x)$ is updated based on Equation \ref{eq:weight_property}.
When answering a distance query between two vertices $s,t\in V(G)$, a bidirectional Dijkstra's search from $s$ to $t$ is performed over $S_G$ with the restriction that edges are only searched in an upward direction w.r.t. the vertex order $\pi$.

\begin{example}
Figure~\ref{fig:ch}(b) illustrates a shortcut graph $S_G$ obtained from an example road network $G$ shown in Figure~\ref{fig:ch}(a) following a total vertex ordering
$1{<}2{<}3{<}4{<}5{<}6{<}7{<}8{<}9{<}10$. Consider a query pair $(6, 9)$ on the road network $G$ in Figure~\ref{fig:ch}. A bidrectional search from $6$ and $9$ is conducted such that only the edges $(6,10),(6,8),(8,9)$ are considered in the search from vertex $6$, and only the edge $(9, 10)$ is considered in the search from vertex $9$ because they lead to vertices with higher ranks.
Thus we get $d_G(6,9) = \min\{ \omega(6,10)+\omega(9,10), \omega(6,8) + \omega(8,9) \} = 6$.
\end{example}

\subsection{Incremental Hierarchical 2-Hop}
\emph{Incremental Hierarchical 2-Hop} (IncH2H)~\cite{zhang2022relative} is built upon \hh~~\cite{ouyang2018hierarchy} to incrementally maintain distance labels and shortcuts for distance queries on dynamic road networks.  

\vspace{0.1cm}
\noindent\emph{\underline{Indexing}.~}\hh~ is a 2-hop labelling constructed using tree decomposition based on DCH. Let $S_G$ be a shortcut graph over $G$~\cite{ouyang2020efficient}. \hh~ first constructs a tree decomposition $T$ based on $S_G$ and $\pi$ as follows. Each vertex $u\in V(S_G)$ corresponds to a tree node in $T$. For a vertex $u\in V(S_G)$ with a non-empty $N^+(u)$, the vertex $v\in N^+(u)$ with lowest rank $\pi(v)$ is assigned as the parent of the tree node of $u$. This construction ensures that for each $u\in V(S_G)$, all vertices in $N^+(u)$ correspond to tree nodes that are ancestors of $u$ in $T$.
Another property of $T$ is that every shortest path between any two vertices $s$ and $t$ in $G$ must pass through at least one vertex in $\{u\}\cup N^+(u)$, where the tree node of $u$ is the lowest common ancestor of $s$ and $t$ in $T$. 
Then, a 2-hop labelling is constructed using $T$. The label $L(v)$ of each vertex $v$ consists of three arrays: (i) \emph{ancestor array} $[w_1,\dots, w_k]$ representing the path from the root to $v$ in $T$, (ii) \emph{distance array} $[\delta_{vw_1},\dots, \delta_{vw_k}]$ where $\delta_{vw_i}=d_G(v,w_i)$ and $\{w_1,\dots, w_k\}$ is the set of vertices that are ancestors of $v$ in $T$, and (iii) \emph{position array} $[i_1,\dots, i_k]$ storing positions of $\{w_1,\dots, w_k\}$ in $T$ which are their depths in $T$. 

\vspace{0.1cm}
\noindent\emph{\underline{Updates and Queries}.~} \hh~ is dynamically maintained 
in two phases: 1) \emph{shortcut maintenance} 
identifies affected shortcuts in $S_G$ and updates their weights; 2) \emph{labelling maintenance} updates distance labels based on affected shortcuts in $S_G$.
For each affected shortcut $(u,w)\in E(S_G)$ with $\pi(u)<\pi(w)$, it finds all ancestors $a$ of $w$ for which the distance between $u$ and $a$ has changed, and updated the corresponding values in the distance array. Finally, descendants $x$ of $u$ are processed to identify pairs $(x,a)$ whose distance has changed, and updates distances in $L(x)$.

Given two vertices $s,t\in V(G)$, their lowest common ancestor $lca(s, t)$ in $T$ is first searched and then $d_G(s,t)$ is computed as
\begin{align}\label{lab:h2h_query}
d_G(s,t)=\min\{ & L(s).dist(i) + L(t).dist(i) \mid  \\\nonumber
& i\in L(x).pos, x= lca(s, t)\}.
\end{align}

\begin{example}
Figure~\ref{fig:h2h} illustrates a tree decomposition $T$ and the \hh~ $L$ for a road network shown in Figure~\ref{fig:ch}(a). $L(1)$ stores an ancestor array $[10,9,8,7,5,1]$ containing all ancestors of $1$ in $T$, a distance array $[6,8,11,4,2,0]$ storing the distances from vertex $1$ to its ancestors, and a position array $[4,5,6]$ storing the positions in the ancestor array of $1$ for $\{7,5,1\}$, the vertices inside the tree node of $1$ in $T$. Suppose the weight of an edge $(6, 10)$ has changed, IncH2H first updates the weight of the affected shortcut $(8,10)$ in $S_G$ shown in Figure~\ref{fig:ch}(b). Then, starting from $\{(6,10),(8,10)\}$, it first updates the distance between vertices $\{6,8\}$ to their ancestor $10$ and iteratively identifies affected downward neighbors of $6$ and $8$ whose distance to $10$ has also changed. $L(2)$ will be updated as its distance to $10$ will be affected as well. For a distance query between vertices $1$ and $2$, $lca(1, 2)=8$ is first obtained, and then using the distances in $L(1)$ and $L(2)$ at the positions $[1, 2, 3]$ in $L(8)$, $d_G(1, 2) = 12$ is obtained according to Equation~\ref{lab:h2h_query}.
\end{example}

\begin{figure}
\centering
\includegraphics[width=0.5\textwidth]{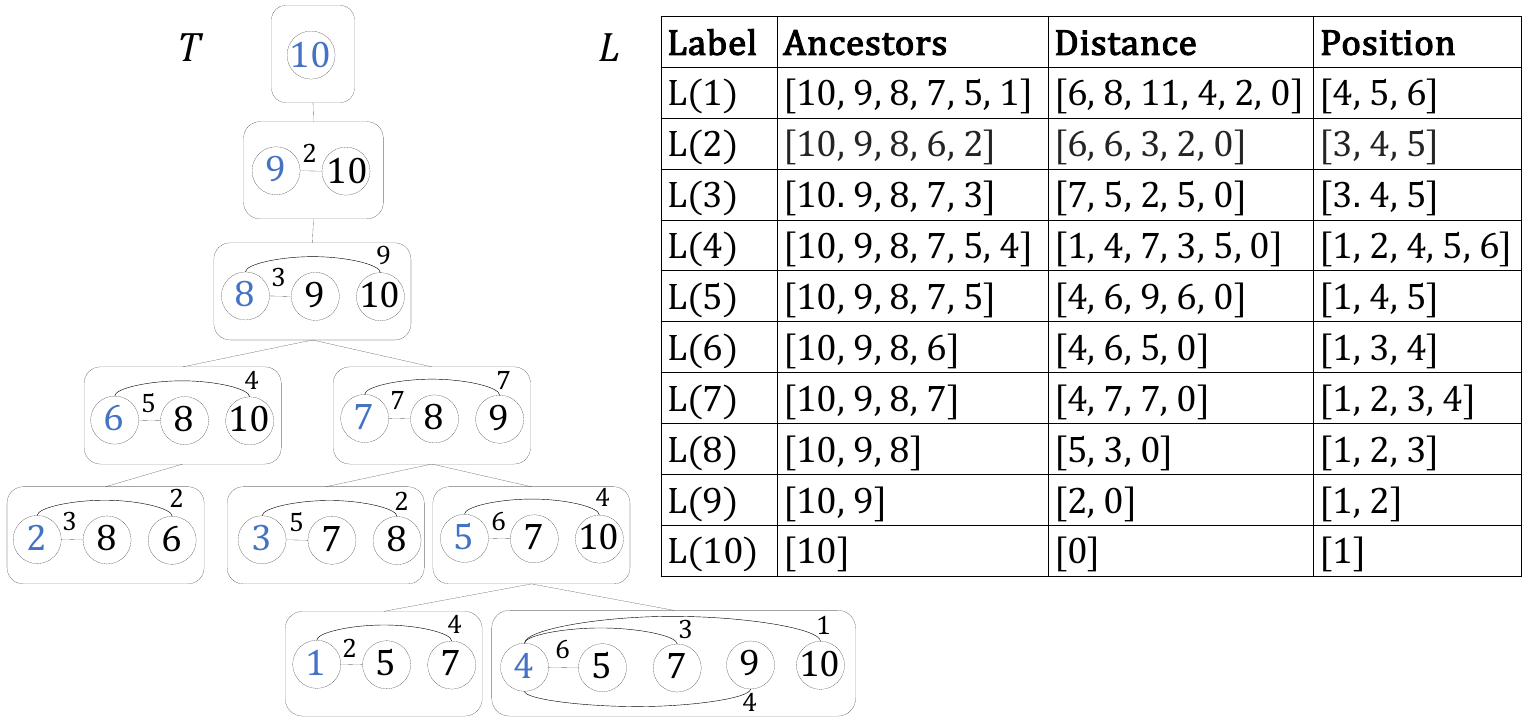}\vspace{-0.2cm}
\caption{A tree decomposition $T$ and its H2H-Index $L$.}
\label{fig:h2h}\vspace{-0.2cm}
\end{figure}

\subsection{Discussion}
Both state-of-the-art methods, DCH and IncH2H, exploit a vertex ordering to construct a shortcut graph $S_G$. This graph is then used in querying and maintenance by DCH, and to construct a tree decomposition by IncH2H. It is known that such a vertex ordering is crucial for constructing $S_G$ with a minimum number of shortcuts. However, finding an optimal vertex ordering that minimizes the number of shortcuts is challenging \cite{bauer2010preprocessing}. As a result, DCH suffers from scalability issues, leading to slow querying in large road networks. Similarly, IncH2H produces tree decompositions with very large height and width, resulting in huge \hh~sizes that hinder efficient maintenance. Additionally, IncH2H requires a complex mechanism for computing the least common ancestor of two vertices, which further degrades query performance.

\section{Our Proposed Solution}
Our proposed solution consists of three components: $(\langle \hq, \hu \rangle, L)$, where $\hq$ is a static \emph{query hierarchy}, $\hu$ is a dynamic \emph{update hierarchy}, and $L$ is a \emph{hierarchical labelling}. Further, $\hq$ and $\hu$ are \emph{order invariant} with respect to a vertex partial order $\ou$ over $V(G)$, and $L$ satisfies the 2-hop cover property.

In the following, we will discuss the details of our solution.

\subsection{Hierarchies: $\hq$ and $\hu$}

\noindent\textbf{Query Hierarchy.~}The purpose of $\hq$ is to improve query efficiency. A natural candidate for $\hq$ is a tree hierarchy, as tree-like structures are known to be very efficient for search~\cite{ouyang2018hierarchy,farhan2023hierarchical}. This is also evidenced by the state-of-the-art method IncH2H~\cite{zhang2022relative} where \hh~employs tree decomposition to accelerate query performance by restricting search to part of the labels of two given query vertices. However, as \hh~constructs a tree decomposition based on CH, it is vulnerable to produce very large heights $h$ and widths $w$ due to the ordering employed in CH. Consequently, query performance becomes slower as it needs to explore labels proportional to $w$ for answering a distance query.

\begin{definition}[Query Hierarchy]\label{def:td}
A \emph{query hierarchy} $\hq$ over a graph $G$ is a balanced binary tree, $\hq=(V_Q, E_Q, \omega_Q)$, where $V_Q$ is a set of tree nodes, $E_Q$ is a set of tree edges, and $\ell: V(G)\rightarrow V_Q$ is a total surjective function, satisfying the following properties:

\begin{enumerate}
\item Each  $N\in\mathcal{N}$ satisfies
\begin{equation*}
|\subtree(N_L)|, |\subtree(N_R)| \leq (1-\beta) \cdot |\subtree(N)|
\end{equation*}
where $0<\beta\leq 0.5$, $\subtree(N)$ denotes a subtree rooted at $N$, and $N_L$ and $N_R$ are the left and right children of $N$, respectively. 

\item Every path connecting $\{s,t\}\subseteq V(G)$ contains a vertex $a$ such that $\ell(a)$ is a common ancestor of $\ell(s)$ and $\ell(t)$.
\end{enumerate}

\end{definition}

\begin{example}
Figure~\ref{fig:query_update_hierarchy}(a) shows a query hierarchy $\hq$ of the road network $G$ in Figure~\ref{fig:ch}(a). Consider two vertices $6$ and $9$ in Figure~\ref{fig:ch}(a) and their corresponding tree nodes in Figure~\ref{fig:query_update_hierarchy}(a). The common ancestors of $6$ and $9$ are $\{\{3, 4, 10\}, \{2\}\}$ and there is a shortest-path between $6$ and $9$ i.e., $\langle 6, 10, 9 \rangle$ containing vertex $10$ from a common ancestor $\{3, 4, 10\}$. 
\end{example}

Following \cite{farhan2023hierarchical}, we construct a tree hierarchy for $\hq$ using the recursive bi-partitioning algorithm which finds balanced and minimal cuts to partition a graph. For each cut, we compute a partition bitstring and depth -- the number of vertices mapped to its ancestors in the tree hierarchy. Partition bitstrings allow us to find the lowest common ancestor of two nodes $s,t$ in constant time, and the depth of this ancestor is then used to identify which parts of $L(s)$ and $L(t)$ to use in computing their distance.
Compared to \cite{farhan2023hierarchical}, a query hierarchy requires that paths only need to intersect \emph{some} common ancestor rather than the lowest one, which thus allows omission of shortcuts during construction. This in turn leads to smaller cuts, reducing both the number of common ancestor vertices and overall labelling size.
However, we need to inspect all common ancestors to answer distance queries, not just the lowest one.

\vspace{0.1cm}
\noindent\textbf{Vertex Partial Order.~}~The tree structure of $\hq$ defines an ancestor-descendant relationship among tree nodes, which can be extended to a vertex partial order as defined below. 

\begin{definition}[Vertex Partial-Order $\ou$]
Let $\hq$ be a \htreename\ and $\preceq$ an arbitrary total order between vertices.
The \emph{vertex partial-order} $\ou$ induced by $\htree$ and $\preceq$ is defined as:
\begin{align*}
v \otree u \quad\Leftrightarrow\quad
& \text{$\ell(v)$ is a strict ancestor of $\ell(u)$ in $\hq$, or}\\
& \ell(v)=\ell(u) \text{ and } v \preceq u
\end{align*}
\end{definition}

We use $\anc(v)=\{w\in V(G)\mid w\ou v\}$ and $\desc(v)=\{w\in V(G)\mid v\ou w\}$ to denote the \emph{ancestors} and \emph{descendants} of $v$, respectively, and $lca(v,w)$ the common ancestor of vertices $v$ and $w$ which is largest w.r.t. $\ou$, i.e., their \emph{lowest common ancestor}.

\begin{example}\label{ex:vertex-partial-order}
Consider $\hq$ in Figure~\ref{fig:query_update_hierarchy}(a). By ordering the vertices within each tree node using their node identifiers, we obtain the vertex partial order
\[
\otree := \{ 3<4<10<\ldots, 1<5, 1<7, 2<6, 2<8<9 \}.
\]
We also have $anc(7)=\{3,4,10,1,7\}$, $anc(4)= \{3,4\}$, $lca(6,7)=10$, and $lca(6,9)=2$. 
\end{example}
\begin{figure}
\centering
\includegraphics[width=0.5\textwidth]{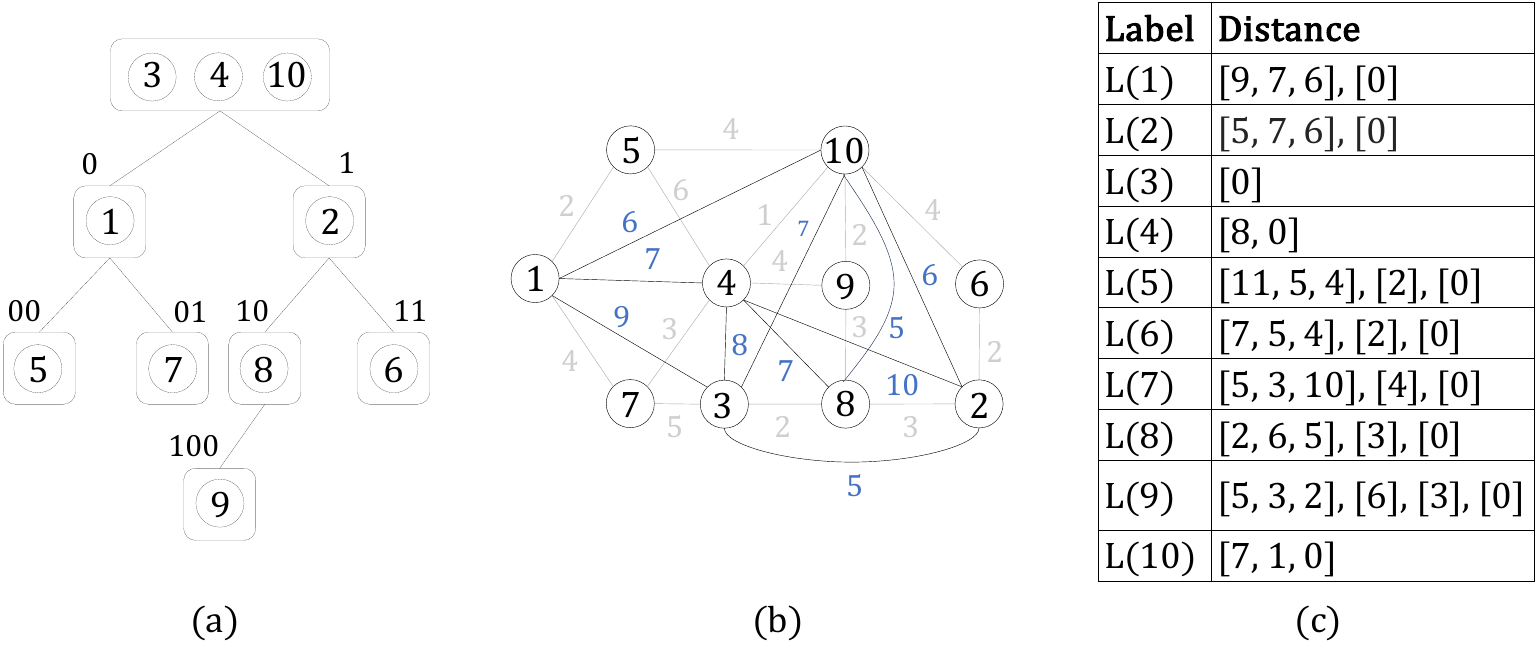}\vspace{-0.2cm}
\caption{(a) Query hierarchy $\hq$, (b) Update hierarchy $\hu$, and (c) Hierarchical labelling $L$.}
\label{fig:query_update_hierarchy}\vspace{-0.2cm}
\end{figure}

\vspace{0.1cm}
\noindent\textbf{Update Hierarchy.~}Unlike $\hq$, the purpose of $\hu$ is to support efficient maintenance of the labelling under dynamic changes. Thus, instead of tree-like structures, a shortcut graph (i.e., the original graph $G$ plus a set of shortcuts) from CH \cite{geisberger2008contraction} would be a good candidate for $\hu$, which can not only preserve distances between vertices in $G$ but also enable fast search for updating shortcuts and distance labels being affected. A central concept of CH is valley path, where intermediate vertices have higher ranks than their endpoints. We extend this notion to partial orders in our work.

\begin{definition}[Valley Path]
A path $p$ between $v$ and $w$ with $w\otree v$ is a \emph{valley path} iff $v\otree u$ for all $u\in V(p)\setminus\{v,w\}$.
\end{definition}

These valley paths form the basis of our update hierarchy.

\begin{definition}[Update Hierarchy]
An update hierarchy $\hu$ over a graph $G$ and a vertex partial order $\otree$ is a shortcut graph which contains a shortcut $(v,w)$ for every valley path between $v$ and $w$.
The weight associated with a shortcut $(v,w)$ is the length of the shortest valley path between $v$ and $w$.
\end{definition}

\begin{example}
Figure~\ref{fig:query_update_hierarchy}(b) illustrates update hierarchy $\hu$ over an example road network $G$ shown in Figure~\ref{fig:ch}(a) using the vertex partial order from Example~\ref{ex:vertex-partial-order}.
There is a shortcut edge $(1,4)$ in $\hu$ because valley paths $\langle1, 7, 4\rangle$ and $\langle1, 5, 4\rangle$ between $1$ and $4$ exist in $G$.
Note that $\langle1,5,10,4\rangle$ is not a valley path since $10\otree 1$.
We have $\omega(1,4)=7$, the weight of the shortest valley path $\langle1, 7, 4\rangle$.
\end{example}

Let $G=(V,E)$ be a road network, $\Delta(E)$ be a set of edge weight updates on $G$, and $G\oslash\Delta(E)$ the resulting graph after applying $\Delta(E)$ to $G$.
We also use $\hu=(V_U, E_U, \omega_U)$ and $\hu^{\Delta}=(V_U^{\Delta}, E_U^{\Delta}, \omega_U^{\Delta})$ to refer to the update hierarchies corresponding to $G$ and $G\oslash\Delta(E)$, respectively. The following properties  are satisfied by $\hu$ and $\hu^{\Delta}$ :

\begin{itemize}
    \item[(U1).] {\textbf{Structural stability}}: $\hu$ and $\hu^{\Delta}$ differ only in edge weights, i.e., $(V_U, E_U)=(V_U^{\Delta}, E_U^{\Delta})$.
    \item[(U2).] {\textbf{Bounded searching}}: A weight update of $(v,w)\in\Delta(E)$ only affects shortcuts $(v',w')$ with $v',w'\otree v,w$. 
\end{itemize}

We construct $\hu$ using the approach \cite{ouyang2020efficient}. However, unlike \cite{wei2020architecture,ouyang2020efficient} which use a pre-defined total vertex order, we perform vertex contractions following $\ou$ in decreasing order. 
The following lemma shows that the \chname\ based on $\otree$ is identical to the classic CH based on any total order extending $\otree$. 

\begin{lemma}\label{L:ch-partial}
Let $\preceq$ be a total ordering on $V(G)$ extending $\otree$, $w\preceq v$, and $p$ a valley path between $v$ and $w$ w.r.t. $\preceq$.
Then $w\otree v$ and $p$ is also a valley path w.r.t. $\otree$.
\end{lemma}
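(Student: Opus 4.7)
The plan is to leverage two structural facts to convert the valley condition w.r.t.\ $\preceq$ into a valley condition w.r.t.\ $\otree$. First, the path-intersection property of $\htree$ (item~(2) of Definition~\ref{def:td}) forces any single-edge path in $G$ to have endpoints whose tree-nodes are ancestor-comparable or coincide, so adjacent vertices in $G$ are always $\otree$-comparable. Second, two vertices that are simultaneously $\otree$-above a common vertex must themselves be $\otree$-comparable, since in a tree any two ancestors of a node lie on one root-to-node chain. With these two observations and antisymmetry of $\preceq$, the argument reduces to a short induction along $p$.

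First I would establish the \emph{adjacency claim}: for every edge $(a,b)\in E(G)$, either $a\otree b$ or $b\otree a$. This follows by applying the path-intersection property to the single-edge path $\langle a,b\rangle$; the only subtlety is the case $\ell(a)=\ell(b)$, where comparability is supplied by the total order $\preceq$ used to refine $\otree$ inside each $\ell$-class.

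Next I would prove $v\otree v_i$ for every intermediate vertex $v_i$ of $p = (v_1{=}v,v_2,\dots,v_k{=}w)$ by induction on $i$. The base case $i=2$ uses only the adjacency claim: the alternative $v_2\otree v$ would give $v_2\preceq v$, contradicting the valley condition $v\preceq v_2$ by antisymmetry. For the inductive step with $v\otree v_{i-1}$ in hand, adjacency either yields $v_{i-1}\otree v_i$ (so $v\otree v_i$ by transitivity) or $v_i\otree v_{i-1}$, which puts $v$ and $v_i$ both $\otree$-above $v_{i-1}$ and hence $\otree$-comparable by the second observation. Of the three resulting alternatives, $v_i\otree v$ contradicts $v\preceq v_i$, the equal-$\ell$ case collapses to $v\preceq v_i$ and yields $v\otree v_i$, and the remaining alternative is the desired one.

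Finally I would apply the same mechanism to the last edge $(v_{k-1},w)$. The alternative $v_{k-1}\otree w$ propagates $v\preceq v_{k-1}\preceq w$ and contradicts $w\preceq v$, so $w\otree v_{k-1}$. Combined with $v\otree v_{k-1}$ (from the induction, or trivially if $k=2$), the second observation makes $w$ and $v$ $\otree$-comparable, and $w\preceq v$ together with antisymmetry pins down the orientation as $w\otree v$. Combined with the inductive conclusion this certifies $p$ as a valley path w.r.t.\ $\otree$. The only delicate part of the argument is tracking the two clauses in the definition of $\otree$ through the equal-$\ell$ sub-cases of the second observation; once those are organised, everything else is antisymmetry-driven case analysis.
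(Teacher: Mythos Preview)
Your proof is correct, but it takes a genuinely different route from the paper's. The paper applies the path-intersection property of Definition~\ref{def:td}(2) directly to the full path $p$ (and, for each intermediate vertex $u$, to the subpath of $p$ from $v$ to $u$): the guaranteed common ancestor $r$ on such a (sub)path satisfies $r\preceq$ both endpoints, while the valley condition makes $w$ (respectively $v$) the $\preceq$-minimum of the (sub)path, forcing $r=w$ (respectively $r=v$) and hence $w\otree v$ (respectively $v\otree u$). This dispatches the lemma in two short applications of Definition~\ref{def:td}(2). Your argument instead applies that definition only to single edges to obtain the adjacency-comparability claim, and then walks along $p$ using the separate tree fact that $\otree$-predecessors of a fixed vertex form a chain. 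Both arguments ultimately rely on the same ingredients, but the paper leverages the separator property globally while you localise it and compensate with induction and the ancestor-chain observation. The paper's route is shorter; yours isolates two reusable structural facts about $\otree$ (edge-endpoints are comparable; common upper bounds are comparable) at the cost of a longer case analysis, particularly in the equal-$\ell$ sub-cases you flagged.
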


\begin{proof}
Definition~\ref{def:td} implies that $p$ must contain a common ancestor $r$ with $r\otree v,w$.
As $p$ is a valley path w.r.t. $\preceq$, we must have $w\preceq r$. Since $\preceq$ extends $\otree$, it follows that $r=w$.
This shows $w\otree v$. 
Consider now any intermediate vertex $u\in V(p)\setminus\{v,w\}$ and the subpath $p'$ of $p$ connecting $v$ and $u$.
Again, Definition~\ref{def:td} implies the existence of $r\in V(p')$ with $r\otree v$ and $r\otree u$. Since $p$ is a valley path w.r.t. $\preceq$, we must have $v\preceq r$.
It follows that $v=r\otree u$, which makes $p$ a valley path w.r.t. $\otree$.
\end{proof}

While our update hierarchy is essentially a contraction hierarchy, there are some subtle differences. To maintain the roles of upward and downward neighbors in our context, we define
\begin{align*}
\nup(v) &= \{ u \mid (v,u)\in E(\hu) \land u \otree v \}; \\
\ndown(v) &= \{ u \mid (v,u)\in E(\hu) \land v \otree u \}.
\end{align*}

\subsection{Hierarchical Labelling: $L$}
We present the design of the hierarchical labelling $L$, which is based on \emph{the principle of separation} between $\hq$ and $\hu$. This ensures that distance queries can be answered by searching $L$ via $\hq$, while dynamic changes in a road network can be reflected in $L$ via $\hu$.

Conceptually, the hierarchical labelling $L$ in our work is associated with a \emph{distance scheme} $\Gamma_L$ and a \emph{distance map} $\gamma_L$ defined over $\Gamma_L$. Below, we provide detailed definitions of $\Gamma$ and $\gamma$, omitting the subscript $L$ to simplify the notation.

\vspace{0.1cm}
\noindent\textbf{Distance Scheme.~}The distance scheme $\Gamma$ describes how distances are to be stored in $L$, which is determined by query hierarchy $\hq$. Since each vertex $v\in V(G)$ is associated with a set of ancestor tree nodes $\mathcal{N}(v)=\{\ell(w)\mid w\in anc(v)\}$ in $\hq$, we start with defining the notion of \emph{tree node scheme} to describe the positions of vertices that are associated with each tree node.

\begin{definition}[Tree Node Scheme] Let $N^t\in \mathcal{N}(v)$, $depth(N^t)$ be the depth of $N^t$ in $\hq$, and $t=depth(N^t)$. The \emph{tree node scheme} of $v$ at $N^t$ is $\Gamma^{t}(v)=[w_1,\dots, w_k]$ where   $w_1\ou\dots\ou w_k$, and $\{w_1,\dots, w_k\}=\{w_i|w_i\in \ell^{-1}(N^t), w_i\ou v\}$. 
\end{definition}

Based on tree node schemes for ancestor tree nodes, we define the distance scheme $\Gamma$ for each vertex $v\in V(G)$.

\begin{definition}[Distance Scheme]\label{def:scheme}
  Let $N=\ell(v)$.
 Then, the \emph{distance scheme} $\Gamma(v)$ of $v$ is defined as
\begin{equation}
  \Gamma(v)=\Gamma^{0}(v)\hspace{0.1cm}|\hspace{0.1cm}\Gamma^{1}(v)\hspace{0.1cm}|\hspace{0.1cm}\dots\hspace{0.1cm}|\hspace{0.1cm}\Gamma^{m}(v), 
\end{equation}
where $\Gamma^{t}(v)$ is the tree node scheme of $v$ at a tree node $N^t\in \mathcal{N}(v)$ and $m=depth(N)$.
\end{definition}

Our distance scheme is purely conceptual (no data is actually stored) and does not change as edge weights in $G$ are modified.

\begin{algorithm}[t]
\caption{Hierarchical Labelling Construction}\label{algo:label-construct}
\SetCommentSty{textit}
\SetKwFunction{FMain}{\om}
\SetKwProg{Fn}{Function}{}{end}
\SetKw{and}{and}
\Fn{\FMain{$\tau$, $\hu$}}{
    initialize $L_*[*]=\infty$\\
    \tcp{copy shortcuts}
    \ForEach{$(v,w)\in E(\hu)$ with $\lidx{v} > \lidx{w}$}{
        $L_v[w]\gets \omega(v,w)$
    }
    \tcp{compute label distances top-down}
    \ForEach{$v\in V$ in increasing order of $\lidx{v}$}{
        \ForEach{$w\in \nup(v)$}{
            \ForEach{$i\in [0, \lidx{w}]$ \label{L:ch-construct-iter-anc}}{
                $L_v[i]\gets \min(L_v[i],\; \omega(v,w) + L_w[i])$
            }
        }
    }
}
\end{algorithm}

\vspace{0.1cm}
\noindent\textbf{Distance Map.~}The distance map $\gamma$ stores distance entries into tree node schemes of the distance scheme $\Gamma$. Note that the distance entries we store are not necessarily distances in $G$ as for IncH2H or HC2L, but distances within a subgraph of $\hu$.

\begin{definition}[Subgraph Distance]\label{def:subgraph-distance}
We denote by $\sgd$ the distance between $w$ and $v$ in the subgraph of $\ch$ induced by 
\[
\desc(w)\cap\anc(v) = \{ u \mid w \otree u \otree v \}
\]
\end{definition}

We use $L_v[w]$ to denote the distance between $v$ and $w$ in this subgraph, i.e., $L_v[w] = \sgd(v, w)$. As we shall see later, these distance entries can also be expressed as distances in certain subgraphs of $G$. This close relationship to $\ch$ enables us to update $L$ efficiently based on changes to $\ch$.

\begin{definition}[Distance Map]\label{def:map}
A \emph{distance map} is a function $\gamma: V(G)\times V(G) \to \mathbb{R}_{\geq 0}$ such that, for any $\Gamma^{t}(v)=[w_1,\dots, w_k]$, 
\begin{equation}
  L^t(v)=\gamma(\Gamma^{t}(v))=[\gamma(v,w_1),\dots, \gamma(v,w_k)]  
\end{equation}
 where $\gamma(v,w_i)=L_v[w_i]$ for $i=1,\dots, k$. Let $N=\ell(v)$. Accordingly, $\gamma$ over $\Gamma(v)$ defines the label $L(v)$ as 
\begin{equation}
  L(v)=L^{0}(v)\hspace{0.1cm}|\hspace{0.1cm}L^{1}(v)\hspace{0.1cm}|\hspace{0.1cm}\dots\hspace{0.1cm}|\hspace{0.1cm}L^{m}(v). 
\end{equation}
\end{definition}

\begin{example}
Consider Figure~\ref{fig:query_update_hierarchy}. The distance scheme of vertex $7$ is $\Gamma(7)=\Gamma^0(7)\mid \Gamma^1(7)\mid \Gamma^2(7)$ with $\Gamma^0(7)=\{3,4,10\}, \Gamma^1(7)=\{1\}$ and $\Gamma^2(7)=\{7\}$ at depth $0, 1$ and $2$ in Figure~\ref{fig:query_update_hierarchy}(a). Accordingly, the distance map of vertex $7$ is shown as $L(7)$ in Figure~\ref{fig:query_update_hierarchy}(c) storing distances to ancestors in $\Gamma(7)$. $L_7[\tau(10)=2]$ is 10 rather than 4, the distance in the subgraph of $\hu$ induced by $\{10,1,7\}$.
\end{example}

We shall denote by $\tau(v)=|\{ w\in V(G)|w \ou v \}|$ the number of ancestors of vertex $v$ w.r.t. $\otree$. In our implementation we use $\tau(u)\leq\tau(v)$ to check whether $u\otree v$ holds. By Lemma~\ref{L:ch-partial} shortcut edge endpoints are always comparable w.r.t. $\otree$, so this characterization is sound. Note that $\tau$ plays the same role as the vertex ordering $\pi$ used in contraction hierarchies, except that the ordering is reversed and vertices incomparable w.r.t. $\otree$ may share $\tau$ values.

Our bottom-up approach to construct the hierarchical labelling $L$ is described as Algorithm~\ref{algo:label-construct}. We compute the label $L(v)$ for each vertex $v$ in increasing order of $\tau(v)$. Specifically, we inspect ancestors of upward neighbors $w\in \nup(v)$ to compute $L_v[i]$ (Lines 7-9).

\subsection{Queries and Updates}\label{subsec:queries-updates}
In the following, we discuss how distance queries and weight updates are processed in our work. 

\vspace{0.1cm}
\noindent\emph{\underline{Distance Queries}.~}Given two vertices $s,t\in V(G)$, a distance query between $s$ and $t$ is processed in two phases via $\hq$ and $L$: 
\begin{itemize}
    \item[(1)] The common ancestors $\anc(s)\cap\anc(t)$ of $s$ and $t$ are quickly found through $lca(s, t)$ in $\hq$;
    \item[(2)] By restricting the search of $L(s)$ and $L(t)$ to only vertices occurring in $\anc(s)\cap\anc(t)$, $d_G(s,t)$ is defined as
\begin{align*}
d_G(s,t) = \min\{ L_s[r]+L_t[r] \mid\; &L_s[r]\in {L}(s), L_t[r]\in {L}(t),\nonumber\\
& r\in\anc(s)\cap\anc(t) \}. 
\end{align*}
\end{itemize}

As we identify tree nodes in $\hq$ via bitstrings, we can compute the level $l$ of $lca(s,t)$ as the common prefix of the bitstrings of $s$ and $t$. Then the common ancestors of $s$ and $t$ are the vertices at the depths between $0$ and $l$ which can also be efficiently found.

\begin{example}
Consider a query pair $(6, 9)$. The bitstrings of vertices $6$ and $9$ are 11 and 100, respectively, shown in Figure~\ref{fig:query_update_hierarchy}(a). The depth of $lca(6, 9) = \{2\}$ is $1$ and their common ancestors are at the depths $0 \leq i \leq 1$, i.e., $anc(6)\cap anc(9)=\{3, 4, 10, 2\}$.
\end{example}

\vspace{0.1cm}
\noindent\emph{\underline{Weight Updates}.~}Given any weight updates $\Delta(E)$ on $G$, $\hu$ and $L$ are processed to reflect $\Delta(E)$, which also involves two phases:
\begin{itemize}
    \item[(1)] $\hu$ is processed to find all shortcuts $\Delta(S)$ on $\hu$ affected by $\Delta(E)$ and the weights of these shortcuts are updated.
    \item[(2)] The distance entries of $L$ affected by $\Delta(S)$ are updated.
\end{itemize}

Observe that using our definitions of upward and downward neighbors, Property~\ref{lab:weight_property} holds for $\hu$ as well, and can be used to maintain shortcuts.
Let $\omega_U$ and $\omega_U^{\Delta}$ refer to the shortcut weights w.r.t. $G$ and $G\oslash\Delta(E)$, respectively.

\begin{definition}[Affected Shortcut]
A shortcut $(v,w)$ in $\hu$ is \emph{affected} by $\Delta(E)$ iff $\omega_U(v,w)\neq\omega_U^\Delta(v,w)$.
\end{definition}

\begin{lemma}
If $(v,w)\in E(\hu)$ is affected by $\Delta(E)$, then either
\begin{enumerate}
\item $(v,w)\in\Delta(E)$, or
\item $(v,x)$ or $(x,w)$ is affected for some $x\in \ndown(v)\cap \ndown(w)$.
\end{enumerate}
\end{lemma}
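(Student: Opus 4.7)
The plan is to apply the minimum weight property of Property~\ref{lab:weight_property}, which the authors note also holds for $\hu$ with our redefined upward/downward neighbors, in order to express both $\omega_U(v,w)$ and $\omega_U^{\Delta}(v,w)$ as a minimum taken over the same finite set of candidate values. By the structural stability property (U1), $\hu$ and $\hu^{\Delta}$ share the same vertex and edge sets, so $\ndown(v)\cap\ndown(w)$ is identical in both hierarchies. Hence both minima range over exactly the same candidate set: the direct term (edge weight $\omega_G(v,w)$, or $\infty$ if no such edge exists in $G$) together with a triangle-sum $\omega_U(v,x)+\omega_U(x,w)$ for each $x\in\ndown(v)\cap\ndown(w)$.

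Since $(v,w)$ is affected, we have $\omega_U(v,w)\neq\omega_U^{\Delta}(v,w)$, so the two minima cannot agree. If every candidate in the right-hand side agreed before and after the update, the two minima would coincide, a contradiction. Therefore at least one candidate must change, and I would split into two cases. In the first case, the direct term changes, i.e., $\omega_G(v,w)$ differs before and after the update; since a weight update $\Delta(E)$ only modifies weights of existing edges in $G$ rather than adding or removing edges, this forces $(v,w)\in\Delta(E)$, yielding conclusion (1). In the second case, some $x\in\ndown(v)\cap\ndown(w)$ satisfies $\omega_U(v,x)+\omega_U(x,w)\neq\omega_U^{\Delta}(v,x)+\omega_U^{\Delta}(x,w)$, and then at least one of the two summands must differ, i.e., $(v,x)$ or $(x,w)$ is itself affected, yielding conclusion (2).

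The main obstacle, if any, is justifying that Property~\ref{lab:weight_property} really transfers verbatim to $\hu$ under the vertex partial order $\otree$ rather than the total order used in classic CH (this is asserted by the authors and follows from Lemma~\ref{L:ch-partial}, which guarantees valley paths coincide under any total extension of $\otree$), and that (U1) lets us compare the two minima term-by-term over a common index set. Once these points are in place, the argument reduces to a single case split with no induction required, since the lemma is a purely local one-step statement about which incident shortcuts can trigger a change at $(v,w)$.
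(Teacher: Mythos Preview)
Your argument is correct. The paper states this lemma without proof, so there is no proof to compare against; your approach via Property~\ref{lab:weight_property} and (U1) is exactly the natural one and cleanly fills the gap the authors left.
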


This enables maintenance of $\hu$ by focusing on the intermediate vertex $x$ of the shortcut triangle $v{-}x{-}w$.
A bottom-up search (w.r.t. $\otree$) ensures that the weights of $(v,x)$ and $(x,w)$ have already been updated before using $(v,w)$.
We describe this, and how to update $L$ based on $\hu$, in the next section.

\begin{algorithm}[t]
\caption{$\ch$ Maintenance (Decrease)}
\label{algo:dch-minus}
\SetCommentSty{textit}
\SetKwFunction{FMain}{$\dch^-$}
\SetKwProg{Fn}{Function}{}{end}
\SetKw{Let}{let}
\Fn{\FMain{$\tau$, $\hu$, $\Delta(E)$}}
{
    $\mathcal{Q}\gets\emptyset$ \tcp{a priority queue}
    \ForEach{$(v,w,\omega_{new})\in\Delta(E)$ with $\lidx{v}>\lidx{w}$}
    {
        \If{$\omega(v,w) > \omega_{new}$}
        {
            $\omega(v,w) \gets \omega_{new}$\\
            add $(v,w)$ into $\mathcal{Q}$
        }
    }
    \ForEach{$(v,w)\in\mathcal{Q}$ in decreasing order of $\lidx{v}$}
    {
        \ForEach{$w'\in\nup(v)\setminus\{w\}$}
        {
            \If{$\omega(w,w')>\omega(v,w) + \omega(v,w')$}
            {
                $\omega(w,w')\gets\omega(v,w) + \omega(v,w')$\\
                add $(\max_\tau(w,w'), \min_\tau(w,w'))$ into $\mathcal{Q}$
            }
        }
    }
    \Return affected shortcuts
}
\end{algorithm} 

\begin{algorithm}[t]
\caption{$\ch$ Maintenance (Increase)}
\label{algo:dch-plus}
\SetCommentSty{textit}
\SetKwFunction{FMain}{$\dch^+$}
\SetKwProg{Fn}{Function}{}{end}
\SetKw{Let}{let}
\Fn{\FMain{$\tau$, $\hu$, $\Delta(E)$}}{
    $\mathcal{Q}\gets\emptyset$ \tcp{a priority queue}
    \ForEach{$(v,w,\omega_{old})\in \Delta(E)$ with $\lidx{v}>\lidx{w}$}{
        \If{$\omega(v,w) = \omega_{old}$}{
            add $(v,w)$ into $\mathcal{Q}$
        }
    }
    \ForEach{$(v,w)\in\mathcal{Q}$ in decreasing order of $\lidx{v}$}{
        \tcp{recompute shortcut distance}
        $\omega_{new}\gets\begin{cases}
        \omega & \text{if }(v,w,\omega)\in E(G)\oslash\Delta(E)\\
        \infty & \text{otherwise}
        \end{cases}$\\
        \ForEach{$x\in \ndown(v)\cap \ndown(w)$}{
            $\omega_{new}\gets\min(\omega_{new},\; \omega(x,v) + \omega(x,w))$
        }
        \If{$\omega(v,w)\neq\omega_{new}$}{
            \ForEach{$w'\in \nup(v)\setminus\{w\}$}{
                \If{$\omega(w,w')=\omega(v,w) + \omega(v,w')$}{
                    add $(\max_\tau(w,w'), \min_\tau(w,w'))$ into $\mathcal{Q}$
                }
            }
            $\omega(v,w)\gets\omega_{new}$
        }
    }
    \Return affected shortcuts   
}
\end{algorithm}

\section{Dynamic Algorithms}\label{sec:dynamic-algorithm}
In this section we propose two dynamic algorithms, \om$^-$ and \om$^+$, to efficiently maintain the hierarchical labelling $L$, handling edge weight decrease and increase, respectively. Since $L$ is constructed upon the update hierarchy $H_U$, we first maintain $H_U$ and then maintain $L$ using $\hu$.
Note that each $L_v[w]$ stores the distance between two vertices $v$ and $w$ in a (fairly small) induced subgraph of $\hu$, and not the distance in $G$, which makes some update operations simpler in our work, compared to IncH2H.

\begin{algorithm}[t]
\caption{\om\ Maintenance (Decrease)}
\label{algo:ch-update-dec}
\SetCommentSty{textit}
\SetKwFunction{FMain}{\om$^-$}
\SetKwProg{Fn}{Function}{}{end}
\SetKw{Let}{let}
\Fn{\FMain{$\tau$, $H_U$, $L$, $\Delta(E)$}}{
    $\Delta(S)\gets$ $\dch^-$($\tau$, $H_U$, $\Delta(E)$) \tcp{shortcut updates}
    $\mathcal{Q}\gets\emptyset$ \tcp{a priority queue}
    \tcp{update distances involving ancestors}
    \ForEach{$(v,w,\omega_{new})\in \Delta(S)$ with $\omega_{new} < L_v[w]$}{
        \ForEach{$i\in[0,\lidx{w}]$}{
            \If{$\omega_{new} + L_w[i] < L_v[i]$}{
                $L_v[i]\gets \omega_{new} + L_w[i]$\\
                add $(v,i)$ into $\mathcal{Q}$
            }
        }
    }
    \tcp{identify and update distances involving descendants}
    \ForEach{$(v,i)\in\mathcal{Q}$ in increasing order of $\lidx{v}$}{
        \ForEach{$u\in \ndown(v)$}{
            \If{$L_u[v] + L_v[i] < L_u[i]$}{
                $L_u[i]\gets L_u[v] + L_v[i]$\\
                add $(u,i)$ into $\mathcal{Q}$
            }
        }
    }
}
\end{algorithm}
\begin{algorithm}[t]
\caption{\om\ Maintenance (Increase)}
\label{algo:ch-update-inc}
\SetCommentSty{textit}
\SetKwFunction{FMain}{\om$^+$}
\SetKwProg{Fn}{Function}{}{end}
\SetKw{Let}{let}
\Fn{\FMain{$\tau$, $\hu$, $L$, $\Delta(E)$}}{
    $\Delta(S)\gets\dch^+$($\tau$, $\hu$, $\Delta(E)$) \tcp{shortcut updates} 
    $\mathcal{Q}\gets\emptyset$ \tcp{a priority queue}
    \tcp{identify distances to ancestors to update}
    \ForEach{$(v,w,\omega_{old})\in\Delta(S)$ with $\omega_{old} = L_v[w]$}{
        \ForEach{$i\in[0,\lidx{w}]$}{
            \If{$\omega_{old} + L_w[i] = L_v[i]$}{
                add $(v,i)$ into $\mathcal{Q}$
            }
        }
    }
    \tcp{identify and update distances involving descendants}
    \ForEach{$(v,i)\in\mathcal{Q}$ in increasing order of $\lidx{v}$}{
        $\omega_{new}\gets\infty$ \tcp{new distance from $v$ to $i$}
        \ForEach{$w\in \nup(v)$ with $\lidx{w}\geq i$}{
            $\omega_{new}\gets \min(\omega_{new},\; \omega(v,w) + L_w[i])$
        }
        \tcp{distance may not have changed after all}
        \If{$\omega_{new} > L_v[i]$}{
            \ForEach{$u\in\ndown(v)$}{
                \If{$L_u[v] + L_v[i] = L_u[i]$}{
                    add $(u,i)$ into $\mathcal{Q}$
                }
            }
            $L_v[i]\gets\omega_{new}$
        }
    }
}
\end{algorithm}

\subsection{Edge Weight Decrease}
Algorithm~\ref{algo:ch-update-dec} shows \emph{\om$^-$}. It first maintains the update hierarchy $\hu$ using $\dch^-$ described as Algorithm~\ref{algo:dch-minus}.
Then, in Algorithm~\ref{algo:dch-minus}, for each update $(v,w,\omega_{new})\in\Delta(E)$ with $\tau(v)>\tau(w)$, $\dch^-$ tests whether the old weight in $\hu$ is greater than $\omega_{new}$.
If so, it updates the weight of the edge $(v, w)$ in $\hu$ and pushes $(v,w)$ to a priority queue $\mathcal{Q}$ (Lines 3-8).
Each affected shortcut $(v,w)\in\mathcal{Q}$ is processed in the decreasing  order of $\tau(v)$, i.e., for each $w'\in \nup(v)$ it checks whether the weight of the shortcut $(w, w')$ in $\hu$ is greater than the weight of the new path through $v$ (Lines 7-9), updates it accordingly and pushes $(w, w')$ to $\mathcal{Q}$ (Lines 10-11). The maintenance of $\hu$ continues until all affected shortcuts are updated and finally affected shortcuts $\Delta(S)$ are returned to Algorithm~\ref{algo:ch-update-dec} (Line 2).

At Lines 4-8 of Algorithm~\ref{algo:ch-update-dec}, for each shortcut $(v,w,\omega_{new})\in\Delta(S)$, it examines distance entries in the label of $v$ w.r.t. ancestors of $w$.
Specifically, for each ancestor $i\in[0,\tau(w)]$, if the old distance $L_v[i]$ is greater than the sum of the new distance $\omega_{new}$ between $v$ and $w$ and $L_w[i]$, it updates $L_v[i]$ and adds $(v, i)$ to a priority queue $\mathcal{Q}$, as the distance update between $v$ and ancestor $i$ may have affected the labels of descendants of $v$.
Next, at Lines 9-13, it processes pairs $(v,i)\in\mathcal{Q}$ in increasing order of $\tau(v)$ to ensure the availability of correct labels of ancestors when updating the labels of descendants. That is, for each pair $(v,i)\in\mathcal{Q}$, it examines each $u\in\ndown(v)$ to check whether distance between $u$ to $i$ is changed via ancestor $v$ ($L_v[i]$ which has been updated) and accordingly update and enqueue pair $(u,i)$ into $\mathcal{Q}$. This process continues iteratively until no further labels are updated. We note that this approach does not consider all paths of Definition~\ref{def:subgraph-distance}, but only paths in which vertices are ordered by $\tau$.
This will be justified later by Lemma~\ref{L:shortcut-chain}.

\subsection{Edge Weight Increase}
Algorithm~\ref{algo:ch-update-inc} describes \om$^+$ for edge weight increase. Similar to \om$^-$, it first maintains $\hu$ using $\dch^+$ described as Algorithm~\ref{algo:dch-plus}.
In Algorithm~\ref{algo:dch-plus}, for each update $(v,w,\omega_{new})\in\Delta(E)$ with $\tau(v)>\tau(w)$, $\dch^+$ tracks path(s) through the updated edge and pushes $(v,w)$ to a priority queue $\mathcal{Q}$ (Lines 3-5).
Then, each affected shortcut $(v,w)\in\mathcal{Q}$ is processed in the decreasing order of $\tau(v)$.
Different from the weight decrease case, the updated weight $\omega_{new}$ of shortcut $(v,w)$ is obtained using Equation~\ref{eq:weight_property} (Lines 7-9) and whether the weight obtained is different from the old weight is checked (Line 10).
If so, it examines upward neighbors $w'\in \nup(v)$ to identify potentially affected shortcuts $(w, w')$ and pushes them to $\mathcal{Q}$, before updating $(v,w)$ (Lines 11-14).
The maintenance of $\hu$ continues until all affected shortcuts are updated, and finally affected shortcuts $\Delta(S)$ are returned to Algorithm~\ref{algo:ch-update-inc} (Line 2).

Then, at Lines 4-7, for each affected shortcut $(v,w,\omega_{old})\in\Delta(S)$, it examines distance entries in the label of $v$ w.r.t. ancestors of $w$.
Those with the same distance as a path ending in the shortcut $(v,w)$ are potentially affected, and added to a priority queue $\mathcal{Q}$.
Next, at Lines 8-16, it processes pairs $(v,i)\in\mathcal{Q}$ in increasing order of $\tau(v)$.
For each pair $(v,i)\in\mathcal{Q}$, it computes the new distance $\omega_{new}$ for $L_v[i]$ (Lines 10-11). If $\omega_{new}$ is greater than the old distance $L_v[i]$, it examines each $u\in\ndown(v)$ to check whether a shortest path between $u$ and $i$ passes through vertex $v$ (Line 14) and might thus be affected and if so enqueues $(u,i)$.
Finally, we update the label entry $L_v[i]$ (Line 16). This process continues iteratively until no further labels are updated. Note that unlike IncH2H, $\dch^+$ and \om$^-$ do not track the support for shortcuts or labels as in practice most of them have a support of one. This helps to save space at the cost of recomputing distances of some unaffected shortcuts and label entries. Experimentally, we found the fraction of unnecessary recomputations to be small enough to justify this trade-off.

\begin{example}
Consider that the weight of edge $(7,4)$ is decreased to $1$ from $3$ in the road network depicted in Figure~\ref{fig:ch}(a). As the weight is decreased, $\omega(7,4)$ of the shortcut $(7,4)$ is updated in $\hu$ and inserted into $\mathcal{Q}$ (Lines 4-6). Lines 7-11 then process $(7,4)$ and inspect weights of the shortcuts $\{(1,4),(4,3)\}$. As $\omega(1,4) > \omega(7,1)+\omega(7,4)$ and $\omega(4,3) > \omega(7,3)+\omega(7,4)$, there weights are updated to $5$ and $6$, respectively. Then in Algorithm~\ref{algo:ch-update-dec}, labels of affected vertices are updated w.r.t. $\Delta(S)=\{(7,4,1),(1,4,5),(4,3,6)\}$. The label entries $L_7[\tau[4]=1], L_1[\tau[4]=1]$ and $L_4[\tau[3]=0]$ are updated to $1, 5$ and $6$, respectively, and inserted into $\mathcal{Q}$ at Lines 4-8.
Due to the updates of $(1,4)$ and $(4,3)$, the downward neighbors $\{5,7\}\in\ndown(1)$ and $\{1,2,5,7,8,9,10\}\in\ndown(4)$ are inspected in Lines 9-13 to see if their distances to 4 and 3 change, but no more labels are updated.

Now consider the weight of the edge $(4,7)$ increased to $5$ from $3$ in the road network depicted in Figure~\ref{fig:ch}(a). Again, the shortcuts $\{(1,4),(4,3)\}$ are inspected w.r.t. $(7,4)$ at Lines 11-13 and found to be affected as $\omega(1,4)=\omega(7,1)+\omega(7,4)$ and $\omega(4,3)=\omega(7,4)+\omega(7,3)$. The weights of $\Delta(S)=\{(7,4),(1,4),(4,3)\}$ are updated in $\hu$ at Line 14 (Algorithm~\ref{algo:dch-plus}) and returned to Algorithm~\ref{algo:ch-update-inc} (Line 2). Then, in Algorithm~\ref{algo:ch-update-inc}, after updating $L_7[\tau[4]=1], L_1[\tau[4]=1]$ and $L_4[\tau[3]=0]$ there are no further updates of downward neighbors.
\end{example}

\subsection{Parallel Maintenance}
We now introduce parallel variants for our dynamic algorithms \om$^-$ and \om$^+$, namely \om$^-_p$ and \om$^+_p$  described as Algorithm~\ref{algo:par-dec}-\ref{algo:par-inc}, respectively. We observe that the most time consuming part of Algorithms~\ref{algo:ch-update-dec} and~\ref{algo:ch-update-inc} is to compute updated distances to descendants at Lines~(9-13) and (8-16) respectively. We can parallelize these parts to significantly improve performance. The main idea is to distribute this work amongst multiple threads. We group entries $(v,i)\in\mathcal{Q}$ w.r.t. ancestors $i$ in their respective queues $Q_i$ (Line 3). This way all entries with the same $i$ are processed within the same thread. To make this work without costly synchronization operations, we must ensure that distance labels accessed by different threads are distinct. Observe that all entries generated when processing an entry $(v,i)\in Q_i$ are in the form $(u,i)$, and thus exclusive to the processing thread. Furthermore, the only labels used for comparison are $L_u[v]$ and $L_v[i]$. The latter also ``belongs'' to the processing thread and will not be updated externally, but $L_u[v]$ could be updated by the thread processing $\lidx{v}$, which is problematic. However, we can simply replace $L_u[v]$ with $\omega(u,v)$ which is not updated by any thread, and still obtain correct updates (by Lemma~\ref{L:shortcut-chain}).

\begin{algorithm}[t]
\caption{\om\ Parallel Maintenance (Decrease)}
\label{algo:par-dec}
\SetCommentSty{textit}
\SetKwFunction{FMain}{\om$^-_p$}
\SetKwProg{Fn}{Function}{}{end}
\SetKw{Let}{let}
\Fn{\FMain{$\tau$, $\hu$, $L$, $\Delta(E)$}}{
    \dots\\
    \tcp{identify and update distances involving descendants}
    partition $\mathcal{Q}$ into $\mathcal{Q}_0, \mathcal{Q}_1, \dots$ with $(v,i)\in \mathcal{Q}_i$\\
    \ForEach{$\mathcal{Q}_i$ in parallel}{
        \ForEach{$(v,i)\in \mathcal{Q}_i$ in increasing order of $\lidx{v}$}{
            \ForEach{$u\in \ndown(v)$}{
                \If{$\omega(u,v) + L_v[i] < L_u[i]$}{
                    $L_u[i]\gets \omega(u,v) + L_v[i]$\\
                    add $(u,i)$ into $\mathcal{Q}_i$
                }
            }
        }
    }
}
\end{algorithm}

\begin{algorithm}[t]
\caption{\om\ Parallel Maintenance (Increase)}
\label{algo:par-inc}
\SetCommentSty{textit}
\SetKwFunction{FMain}{\om$^+_p$}
\SetKwProg{Fn}{Function}{}{end}
\SetKw{Let}{let}
\Fn{\FMain{$\tau$, $\hu$, $L$, $\Delta(E)$}}{
    \dots\\
    \tcp{identify and update distances involving descendants}
    partition $\mathcal{Q}$ into $\mathcal{Q}_0, \mathcal{Q}_1, \dots$ with $(v,i)\in \mathcal{Q}_i$\\
    \ForEach{$\mathcal{Q}_i$ in parallel}{
        \ForEach{$(v,i)\in \mathcal{Q}_i$ in increasing order of $\lidx{v}$}{
            $\omega_{new}\gets\infty$ \tcp{new distance from $v$ to $i$}
            \ForEach{$w\in \nup(v)$ with $\lidx{w}\geq i$}{
                $\omega_{new}\gets \min(\omega_{new},\; \omega(v,w) + L_w[i])$
            }
            \tcp{distance may not have changed after all}
            \If{$\omega_{new} > L_v[i]$}{
                \ForEach{$u\in\ndown(v)$}{
                    \If{$\omega(u,v) + L_v[i] = L_u[i]$}{
                        add $(u,i)$ into $\mathcal{Q}_i$
                    }
                }
                $L_v[i]\gets\omega_{new}$
            }
        }
    }
}
\end{algorithm}
\section{Theoretical Results}\label{S:theory}
We discuss several theoretical aspects of our solution, including correctness, 2-hop cover property, and complexity.

\vspace{0.1cm}
\noindent\textbf{Correctness Analysis.~}~We first establish a connection between shortcuts in $\hu$ and distance entries in $L$.

\begin{definition}[Shortcut Chain]\label{def:shortcut-chain}
A \emph{shortcut chain} is a descending sequence of vertices $v_1 \ou \ldots \ou v_n$ such that $v_i$ and $v_{i+1}$ (for $i\in[1,n)$) are connected by a shortcut in $\hu$, i.e., the shortcut chain \emph{connects} $v_1$ and $v_n$.
The \emph{length} of a shortcut chain is the sum of weights associated with the shortcuts involved.
\end{definition}

\begin{example}
Consider the update hierarchy $\hu$ from Figure~\ref{fig:query_update_hierarchy}(b).
Here 3-2-6 forms a shortcut chain of length 7, connecting 3 and 6. 
\end{example}

Let $d_G^w(v,u)$ denote the distance between two vertices $v$ and $u$ in the subgraph of $G$ induced by $desc(w)$. We have the following.

\begin{lemma}\label{L:shortcut-chain}
For any $w,v$ with $w\otree v$, there exists a shortcut chain from $w$ to $v$ of length $d_G^{w}(w,v)$.
\end{lemma}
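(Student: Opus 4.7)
The plan is to argue by strong induction on the number of vertices of a shortest $w$-$v$ path $P$ in $G[\desc(w)]$, choosing among all such shortest paths one with the fewest vertices. Two structural facts about the \htreename\ $\hq$ drive the argument: (i) adjacent vertices on any path in $G$ are $\otree$-comparable, because property~(2) of Definition~\ref{def:td} applied to a single-edge path forces the common ancestor on that edge to be one of its two endpoints; and (ii) the $\otree$-ancestors of any vertex form a chain, since $\hq$ is a tree.

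The base cases are $w=v$ (empty chain, length $0$) and $|V(P)|=2$, i.e.\ $P=(w,v)$ is a single edge. In the edge case the edge itself is trivially a valley path, so $\hu$ contains the shortcut $(w,v)$; its weight equals the shortest valley-path length, and were this strictly less than $\omega(w,v)$, that shorter valley path would lie in $G[\desc(v)]\subseteq G[\desc(w)]$ and contradict the shortness of $P$. Hence the shortcut has weight $d_G^w(w,v)$, and $(w,v)$ is itself the required chain.

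For the inductive step with $|V(P)|\geq 3$, let $u^*$ be a $\otree$-maximum vertex of $P\setminus\{w\}$, at position $m$ on $P$. If $u^*=v$, then $u_{k-1}\otree v$ by the maximality of $v$, and the prefix $(u_0,\dots,u_{k-1})$ is a shortest $w$-$u_{k-1}$ path in $G[\desc(w)]$ (otherwise any strictly shorter prefix, concatenated with the edge $(u_{k-1},v)$, would yield a strictly shorter $w$-$v$ walk). Applying the induction hypothesis to $(w,u_{k-1})$ and appending $v$ via the shortcut $(u_{k-1},v)\in\hu$ (whose weight is $\omega(u_{k-1},v)$ by the base-case argument) produces a descending chain of total length $d_G^w(w,v)$. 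Otherwise $u^*$ is interior: by~(i) its two neighbors on $P$ are $\otree$-comparable with $u^*$, and by the maximality of $u^*$ (using $w\otree u^*$ when $u_{m-1}=w$) both must be $\otree$-ancestors of $u^*$; by~(ii) they are mutually comparable, say $u_{m-1}\otree u_{m+1}$, so $(u_{m-1},u^*,u_{m+1})$ is a valley path. Thus $\hu$ contains the shortcut $(u_{m-1},u_{m+1})$, whose weight equals this subpath's weight by the shortness of $P$ (any strictly shorter valley path lies in $G[\desc(u_{m+1})]\subseteq G[\desc(w)]$ and would contradict minimality). Substituting into $P$ the shortcut's underlying shortest valley path gives a $w$-$v$ walk in $G[\desc(w)]$ of the same length, from which a simple subpath of at most $|V(P)|-1$ vertices can be extracted, enabling the induction hypothesis to apply.

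The delicate step is the interior case: one must verify that the induction measure strictly decreases after substituting the shortcut's valley path, which follows because the replacement preserves total weight (so the resulting simple subpath remains shortest) and eliminates $u^*$ without reintroducing it. An elegant alternative that sidesteps this bookkeeping is to use Lemma~\ref{L:ch-partial} to identify $\hu$ restricted to $\desc(w)$ with the contraction hierarchy of $G[\desc(w)]$ under any total order extending $\otree$ on $\desc(w)$; since $w$ is $\otree$-minimum in $\desc(w)$, the bidirectional ascending search from $w$ and $v$ in that CH must meet at $w$, yielding directly a descending shortcut chain from $w$ to $v$ of length $d_G^w(w,v)$ via the classical CH distance-preservation theorem.
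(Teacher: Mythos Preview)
Your $u^*=v$ case and the closing ``elegant alternative'' via classical CH distance preservation are both sound (the latter needs the observation that the proof of Lemma~\ref{L:ch-partial} transfers verbatim to $G[\desc(w)]$, which holds because property~(2) of Definition~\ref{def:td} restricts to paths in $G[\desc(w)]$). However, your main inductive argument has a genuine gap in the interior case.

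The issue is the claim that substituting the shortcut's shortest valley path ``eliminates $u^*$ without reintroducing it''. Nothing prevents the unique shortest valley path between $u_{m-1}$ and $u_{m+1}$ from being exactly $(u_{m-1},u^*,u_{m+1})$ itself, in which case the substitution is the identity and the vertex count does not decrease. Concretely: take $w\otree a\otree v\otree b$ (a chain in $\hq$) with $E(G)=\{(w,b),(b,a),(a,v)\}$, all of weight~$1$. The unique shortest $w$--$v$ path in $G[\desc(w)]$ is $P=(w,b,a,v)$; the $\otree$-maximum of $P\setminus\{w\}$ is $u^*=b$, which is interior; its neighbors on $P$ are $w$ and $a$; and the only valley path between $w$ and $a$ is $(w,b,a)$. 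Substitution returns $P$ unchanged and the induction stalls.

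The paper sidesteps this by working from the opposite end of the order. It fixes a total order $\preceq$ extending $\otree$, takes the $\preceq$-\emph{minimum} intermediate vertex $u$ of $p$, and splits $p$ at $u$: the prefix $p_{wu}$ is automatically a valley path w.r.t.\ $\preceq$ (all other intermediates are $\succ u$), hence w.r.t.\ $\otree$ by Lemma~\ref{L:ch-partial}, and the suffix $p_{uv}$ lies entirely in $\desc(u)$ with strictly fewer edges, so induction on edge count applies. Crucially, the decomposition is proved for \emph{any} path in $\desc(w)$, not just shortest ones, so no weight bookkeeping is needed during the induction; the length equality is argued separately afterwards. This is essentially the dual of your $u^*=v$ case, but choosing the minimum rather than a maximal element is precisely what makes the recursion terminate.
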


\begin{proof}
Let $p$ be a shortest path from $w$ to $v$ passing only through the descendants of $w$.
We show that $p$ can be decomposed into subpaths $p_i$ between vertices $w=v_1 \ou \ldots \ou v_n=v$ such that each $p_i$ is a valley path.
Once shown the claim follows.

Let $\preceq$ be a total ordering on $V(G)$ extending $\ou$.
By Lemma~\ref{L:ch-partial} it suffices to show that the subpaths $p_i$ are valley paths w.r.t. $\ou$.
Let $u\in V(p)\setminus\{v,w\}$ be the smallest intermediate vertex w.r.t. $\preceq$.
If $v\prec u$, then $p$ is already a valley path and the decomposition is trivial.
Otherwise, we must have $w\prec u\prec v$ and can decompose $p$ into $p_{wu}$ and $p_{uv}$ connecting $w$ to $u$ and $u$ to $v$, respectively.
As $u$ is the minimal intermediate vertex, $p_{wu}$ is a valley path, and $p_{uv}$ passes only through the descendants of $u$.
By induction on the length of $p$, $p_{uv}$ can be decomposed into a chain of valley paths, which combined with $p_{wu}$ forms a decomposition of $p$.
\end{proof}

\begin{theorem}
Algorithms~\ref{algo:ch-update-dec} and~\ref{algo:ch-update-inc} correctly update distance entries in $\lab$ to contain distances w.r.t. the updated graph.
\end{theorem}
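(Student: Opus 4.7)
The plan is to prove both parts simultaneously by leveraging Lemma~\ref{L:shortcut-chain}, which identifies each entry $L_v[w]$ with the length of a shortest shortcut chain from $w$ to $v$, equivalently $d_G^{w}(w,v)$. This reduces correctness of the label maintenance to showing that the algorithms correctly recompute shortest shortcut chains after weight changes, given that $\dch^{-}$ and $\dch^{+}$ already maintain the shortcut weights in $\hu$ correctly (a property of the CH-style maintenance that can be cited from the prior work it is built on). Throughout, I would use induction on the processing order (decreasing~$\tau(v)$ for shortcut updates, increasing~$\tau(v)$ for label updates) to ensure that whenever an entry $L_v[i]$ is touched, the entries $L_w[i]$ for $w$ with $v\otree w$ and $\tau(w)\leq \tau(v)$ that it depends on already hold their correct updated values.

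For the decrease case (Algorithm~\ref{algo:ch-update-dec}), I would proceed in two directions. \emph{Soundness}: every assignment $L_v[i]\gets \omega_{new}+L_w[i]$ or $L_v[i]\gets L_u[v]+L_v[i]$ decreases $L_v[i]$ to the length of a concrete shortcut chain from $i$ to $v$ in the updated $\hu$, hence is an upper bound on the true subgraph distance. \emph{Completeness}: let $v_1\otree\cdots\otree v_n$ be a shortest shortcut chain in the updated $\hu$ achieving $d_G^{v_1}(v_1,v_n)$. By induction on the length of the chain and on $\tau(v_n)$, at least one edge of this chain must be an affected shortcut, which means it is put into $\Delta(S)$ by $\dch^{-}$; the ``update distances involving ancestors'' loop then corrects $L_{v_k}[v_1]$ for the top endpoint of that edge, and the subsequent bottom-up propagation through downward neighbours corrects $L_{v_n}[v_1]$. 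Here I would invoke Lemma~\ref{L:shortcut-chain} to justify using only shortcut chains (paths with $\tau$-ordered vertices) rather than arbitrary paths in the induced subgraph.

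For the increase case (Algorithm~\ref{algo:ch-update-inc}), soundness is simpler: at Lines~10--11 the entry is recomputed as $\min_{w\in N^+(v),\,\tau(w)\geq i}(\omega(v,w)+L_w[i])$, and by the inductive hypothesis each $L_w[i]$ is already correct and the shortcut weights are correct, so this yields exactly $d_G^{i}(i,v)$ in the updated graph (using Lemma~\ref{L:shortcut-chain} to guarantee that the optimal shortcut chain decomposes through an upward neighbour of~$v$). The harder direction is \emph{completeness}: I must show that every pair $(v,i)$ whose true distance increases is eventually enqueued. The key observation is that any such pair is witnessed in the pre-update graph by a shortest shortcut chain $i=v_1\otree\cdots\otree v_n=v$ one of whose shortcuts $(v_k,v_{k+1})$ has $\omega_{old}=\omega_U(v_k,v_{k+1})>\omega_U^{\Delta}(v_k,v_{k+1})$ (reinterpreting ``affected'' for the increase case), so $(v_k,v_{k+1})\in\Delta(S)$; the equality test $\omega_{old}+L_w[i]=L_v[i]$ at Line~6, together with the downward-propagation equality test at Line~14, then enqueues $(v_j,i)$ for each $j>k$ along this chain, and in particular $(v,i)$.

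The main obstacle, and what will need the most care, is the completeness proof for the increase case in the presence of multiple shortest paths: the equality tests guarantee that \emph{some} path witnessing the old distance is followed, but one has to argue that this suffices, i.e.\ that if \emph{every} old shortest shortcut chain is broken then at least one of the equality tests along each chain succeeds so that the endpoint gets enqueued. I would handle this by a careful induction on $\tau$, showing that if $L_v[i]$ is the old distance before any update and the true distance strictly increases, then along any old shortest shortcut chain the first edge whose endpoint sees an increase triggers the equality test. The parallel variants (Algorithms~\ref{algo:par-dec}--\ref{algo:par-inc}) then follow by the remark in the paper that replacing $L_u[v]$ with $\omega(u,v)$ is justified by Lemma~\ref{L:shortcut-chain}, which I would spell out as a short corollary to close out the proof.
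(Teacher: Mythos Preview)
Your proposal is essentially correct and follows the same line as the paper's own argument, which is only a three-sentence sketch: identify label entries with lengths of minimal shortcut chains via Lemma~\ref{L:shortcut-chain}, observe that the first loop catches entries whose \emph{last} shortcut in the chain was affected, and that the second loop extends these chains downward. You flesh this out into a proper soundness/completeness argument with induction on~$\tau$, which is exactly the detail the paper omits.

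Two small slips to fix before writing it up. First, in the decrease case you wrote $L_v[i]\gets L_u[v]+L_v[i]$ where Algorithm~\ref{algo:ch-update-dec} actually sets $L_u[i]\gets L_u[v]+L_v[i]$; the roles of $u$ and $v$ are swapped. Second, in the increase completeness argument you wrote $\omega_U(v_k,v_{k+1})>\omega_U^{\Delta}(v_k,v_{k+1})$, but for a weight increase the affected shortcut satisfies the opposite inequality $\omega_U<\omega_U^{\Delta}$; your parenthetical ``reinterpreting `affected' for the increase case'' suggests you meant simply $\omega_U\neq\omega_U^{\Delta}$, so just drop the inequality. Neither error affects the structure of the argument.
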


\begin{proof}[Proof (Sketch)]
By Definition~\ref{def:shortcut-chain}, distance entries in $\lab$ are the lengths of minimal shortcut chains.
The first for loop in Algorithms~\ref{algo:ch-update-dec} and~\ref{algo:ch-update-inc} collects all (potentially) affected distance entries  associated with shortcut chains that had their last shortcut updated.
The second for loop iteratively extends those shortcut chains.
\end{proof}

Recall that the subgraph distance  $\sgd$ in Definition~\ref{def:subgraph-distance} is based on the subgraph of $\ch$ containing only vertices between $w$ and $v$. Since shortcut chains between $w$ and $v$ are paths within these subgraphs, we immediately obtain the following corollary.

\begin{corollary}\label{cor:sgd}
Let $w\otree v$. Then $d_G^w(w,v)=\sgd(w,v)$.
\end{corollary}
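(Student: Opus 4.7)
The plan is to prove the two inequalities $\sgd(w,v)\leq d_G^w(w,v)$ and $d_G^w(w,v)\leq \sgd(w,v)$ separately, each leveraging a different direction of the correspondence between shortcuts in $\hu$ and valley paths in $G$.

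For the upper bound $\sgd(w,v)\leq d_G^w(w,v)$, I would simply invoke Lemma~\ref{L:shortcut-chain} to obtain a shortcut chain $w=v_1\otree v_2\otree\cdots\otree v_n=v$ in $\hu$ of length $d_G^w(w,v)$. The key observation is that every vertex $v_i$ in this chain lies in $\desc(w)\cap\anc(v)$: indeed $w=v_1\otree v_i$ gives $v_i\in\desc(w)$, and $v_i\otree v_n=v$ gives $v_i\in\anc(v)$. Hence the chain is itself a path in the subgraph of $\hu$ induced by $\{u\mid w\otree u\otree v\}$, so the subgraph distance $\sgd(w,v)$ is at most its length $d_G^w(w,v)$.

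For the reverse inequality $d_G^w(w,v)\leq\sgd(w,v)$, I would take a shortest path $P$ in the subgraph of $\hu$ induced by $\{u\mid w\otree u\otree v\}$ and expand each shortcut $(a,b)\in E(P)$ (with, say, $a\otree b$) into a shortest valley path $\pi_{ab}$ in $G$ of the same weight as $\omega_U(a,b)$. Every intermediate vertex of $\pi_{ab}$ lies in $\desc(a)\subseteq\desc(w)$, and the endpoints $a,b$ are in $\desc(w)$ as well, so $\pi_{ab}$ is a path in the subgraph of $G$ induced by $\desc(w)$. Concatenating these valley paths along $P$ yields a walk from $w$ to $v$ within this subgraph whose total weight equals $\sgd(w,v)$, which bounds $d_G^w(w,v)$ from above.

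Combining the two bounds gives the claimed equality. The only subtlety, and the step most worth double-checking, is the containment argument in the second direction: one must verify that expanding a shortcut $(a,b)$ with $w\otree a\otree b\otree v$ never introduces a vertex outside $\desc(w)$. This is handled precisely by the definition of a valley path together with the transitivity of $\otree$, so once this containment is spelled out the corollary falls out immediately from Lemma~\ref{L:shortcut-chain}.
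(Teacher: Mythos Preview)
Your proof is correct and, for the direction $\sgd(w,v)\leq d_G^w(w,v)$, matches the paper's argument exactly: the paper simply notes that shortcut chains from $w$ to $v$ are paths in the subgraph of $\hu$ induced by $\{u\mid w\otree u\otree v\}$, so Lemma~\ref{L:shortcut-chain} immediately yields the bound. Your reverse inequality, obtained by expanding each shortcut of a shortest $\hu$-subgraph path into its underlying valley path and observing that all resulting vertices stay inside $\desc(w)$, is not spelled out in the paper at all; the paper treats the corollary as an ``immediate'' consequence and leaves this direction to the reader. So your write-up is more complete than the paper's own justification, while following the same underlying idea.
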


\vspace{0.1cm}
\noindent\textbf{2-Hop Cover Proof.~}~We show that the hierarchical labelling $\lab$ is indeed a 2-hop labelling~\cite{cohen2003reachability}.
Recall that $L_v[w] = \sgd(w, v)$.

\begin{lemma}\label{L:stable-2-hop}
For any two vertices $s,t\in V(G)$, there exists at least one vertex $r$ with $r \ou s$ and $r \ou t$ s.t. $L_s[r] + L_t[r] = d_G(s,t)$.
\end{lemma}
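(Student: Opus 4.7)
The plan is to pick $r$ by analyzing a shortest $s$-$t$ path in $G$. Let $p$ be any shortest path from $s$ to $t$ in $G$, and let $r$ be a vertex of $p$ that minimizes $\tau$ (ties broken arbitrarily). I will argue that (i) $r \otree s$ and $r \otree t$, and (ii) the entire $p$ lies inside $\desc(r)$; combined with Corollary~\ref{cor:sgd} this gives the desired equality.

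The key lemma I would establish first is a monotonicity property of $\tau$: if $b \otree r$, then $\tau(b) \le \tau(r)$, with equality iff $b = r$. This follows by unpacking the definition of $\otree$: a strict tree-ancestor relation $\ell(b)\neq \ell(r)$ forces $r \in \{w : w \otree r\} \setminus \{w : w \otree b\}$, while the same-tree-node case $\ell(b) = \ell(r)$ with $b \prec r$ similarly drops $r$ from the ancestor set of $b$.

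With this in hand, step (i) proceeds as follows. By Definition~\ref{def:td}(2) applied to $p$, there exists $a$ on $p$ with $a \otree s,t$. Applying Definition~\ref{def:td}(2) to the subpath of $p$ between $a$ and $r$ yields $b$ on $p$ with $b \otree a$ and $b \otree r$. Since $b$ lies on $p$, the minimality of $r$ gives $\tau(b) \ge \tau(r)$; the monotonicity lemma then forces $b = r$, hence $r \otree a$, and transitivity with $a \otree s,t$ yields (i). For step (ii), for any vertex $v$ on $p$ I would apply Definition~\ref{def:td}(2) to the subpath of $p$ between $v$ and $r$: the resulting common ancestor $b_v$ of $v$ and $r$ on $p$ must again equal $r$ by the same $\tau$-argument, giving $r \otree v$.

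Splitting $p$ at $r$ into subpaths $p_1$ (from $s$ to $r$) and $p_2$ (from $r$ to $t$), both lie inside the subgraph of $G$ induced by $\desc(r)$, so $d_G^r(r,s) + d_G^r(r,t) \le |p_1| + |p_2| = d_G(s,t)$. The reverse inequality is the triangle inequality combined with $d_G(s,t) \le d_G(s,r) + d_G(r,t) \le d_G^r(r,s) + d_G^r(r,t)$. Corollary~\ref{cor:sgd} identifies $L_s[r] = d_G^r(r,s)$ and $L_t[r] = d_G^r(r,t)$, completing the proof. The main obstacle is the monotonicity lemma on $\tau$: one must carefully handle both cases in the definition of $\otree$ (strict tree ancestor vs.\ same-label tie-break) to ensure that equality of $\tau$-values under $b \otree r$ forces $b = r$. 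Once that is in place, the remainder is a routine path-decomposition argument driven by two applications of Definition~\ref{def:td}(2).
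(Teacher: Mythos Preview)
Your proposal takes exactly the paper's approach: pick $r$ minimizing $\tau$ on a shortest $s$--$t$ path, argue $V(p)\subseteq\desc(r)$, and finish via Corollary~\ref{cor:sgd}; the paper simply asserts ``$r\ou v$ for all $v\in V(p)$'' without proof, so your subpath justification is welcome (and note step~(i) is redundant, being the special case $v\in\{s,t\}$ of step~(ii)). One slip to watch: Definition~\ref{def:td}(2) only guarantees that $\ell(b_v)$ is a \emph{tree} ancestor of $\ell(r)$, not that $b_v\otree r$, so when $\ell(b_v)=\ell(r)$ you cannot invoke your monotonicity lemma to force $b_v=r$; the easy fix is that in every case $\ell(r)$ ends up a tree ancestor of $\ell(v)$, whence minimality of $\tau(r)$ on $p$ (together with your monotonicity lemma applied to $v$ and $r$ if $\ell(v)=\ell(r)$) gives $r\otree v$ directly.
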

\begin{proof}
Let $p$ be a shortest path between $s$ and $t$ in $G$, and $r$ the vertex in $p$ with the minimal $\tau(r)$.
Then $r\ou v$ for all $v\in V(p)$. Thus, $p$ lies in the subgraph of $G$ induced by the descendants of $r$. By Corollary~\ref{cor:sgd} $L_s[r]$ and $L_t[r]$ store distances within this subgraph.
It follows that $L_s[r] + L_t[r] \leq |p| = d_G(s,t)$.
The direction $L_s[r] + L_t[r] \geq d_G(s,t)$ is obvious. 
\end{proof}

\vspace{0.1cm}
\noindent\textbf{Complexity Analysis.~}~Let $E_\Delta$ denote the number of affected edges in $G$, $S_\Delta$ the number of affected shortcuts in $\ch$, and $L_\Delta$ the number of affected distance entries in $\lab$. 
Further, denote by $d_{\max}$ the maximum degree of vertices in $\ch$, and by $h$ the maximum number of ancestors of a vetex w.r.t. $\otree$.
We assume constant time access to shortcut weights in $\ch$.

\begin{theorem}
Algorithms~\ref{algo:dch-minus} and~\ref{algo:dch-plus} operate in
$O(E_\Delta + S_\Delta\cdot d_{\max})$ and
$O(E_\Delta\cdot d_{\max} + S_\Delta\cdot d_{\max}^2)$, respectively.
\end{theorem}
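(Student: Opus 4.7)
My plan is to bound the running time of each algorithm as the sum of a cheap initialization phase and a main propagation loop, charging work to the affected edges and affected shortcuts. For Algorithm~\ref{algo:dch-minus}, the first \textbf{for} loop over $\Delta(E)$ visits each update once and performs an $O(1)$ weight comparison, possible weight update, and queue push, contributing $O(E_\Delta)$. For the propagation loop, every dequeued shortcut corresponds to an affected shortcut whose weight has been strictly decreased by the pushing step at Line~11, so the number of dequeues is $O(S_\Delta)$. Each dequeue iterates $w'\in\nup(v)\setminus\{w\}$ (at most $d_{\max}$ elements) and performs $O(1)$ work per neighbor under the stated $O(1)$ weight-access assumption on $\ch$. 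Summing yields the claimed $O(E_\Delta + S_\Delta\cdot d_{\max})$.

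For Algorithm~\ref{algo:dch-plus}, the same framework applies but each factor grows by a $d_{\max}$. Per dequeue the algorithm recomputes $\omega_{new}$ by iterating $\ndown(v)\cap\ndown(w)$ (Lines~7--9) and then inspects $\nup(v)\setminus\{w\}$ to identify candidate shortcuts (Lines~11--13); both sets have size at most $d_{\max}$, giving an $O(d_{\max})$ per-dequeue cost rather than $O(1)$. For the number of dequeues, initialization pushes $E_\Delta$ entries, and each affected shortcut reaches the guarded branch at Line~10 and can push at most $d_{\max}$ further entries through Lines~11--13, so the number of dequeues is $O(E_\Delta + S_\Delta\cdot d_{\max})$. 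Multiplying by the per-dequeue cost gives $O(E_\Delta\cdot d_{\max} + S_\Delta\cdot d_{\max}^2)$.

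The main obstacle is the bookkeeping that bounds how often a single shortcut can be enqueued, since nothing in either algorithm explicitly prevents duplicates in $\mathcal{Q}$. In the decrease case I will exploit the strict monotone decrease at Line~11, together with the decreasing-$\tau(v)$ processing order which guarantees that any re-push of a shortcut must stem from a further strict improvement, to charge every push to a witnessing affected shortcut and keep the count at $O(S_\Delta)$. In the increase case the enqueues at Lines~11--13 are optimistic and some dequeued entries are later found unaffected at Line~10, but each such ``false alarm'' still costs only $O(d_{\max})$ and is charged to the affected shortcut whose processing created it, so the total stays within the claimed $O(S_\Delta\cdot d_{\max})$ bound on dequeues.
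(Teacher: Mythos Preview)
Your proposal is correct and follows essentially the same charging argument as the paper: bound the work by the number of potentially affected shortcuts that get inspected (at most $E_\Delta$ from initialization plus $O(d_{\max})$ per genuinely affected shortcut), and multiply by the per-item cost ($O(1)$ for Algorithm~\ref{algo:dch-minus}, $O(d_{\max})$ for Algorithm~\ref{algo:dch-plus} due to the recomputation at Lines~8--9). Your identification of duplicate enqueues as the only delicate point is apt; note that the clean $O(S_\Delta)$ dequeue bound for Algorithm~\ref{algo:dch-minus} really relies on the decreasing-$\tau(v)$ processing order together with $\mathcal{Q}$ having set semantics (so each affected shortcut is dequeued once), rather than on the strict-decrease monotonicity alone, which by itself does not preclude several pushes of the same shortcut before it is first dequeued.
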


\begin{proof}
Consider Algorithm~\ref{algo:dch-minus}. The shortcuts that are \emph{potentially} affected are  affected edges, plus shortcuts incident to affected shortcuts.
The number of the latter is bounded by $O(S_\Delta\cdot d_{\max})$.
For Algorithm~\ref{algo:dch-plus},  the \emph{potentially} affected shortcuts are similarly bounded, but processing each shortcut requires $O(d_{\max})$ time to compute the new distance value.
\end{proof}

\begin{theorem}
Algorithms~\ref{algo:ch-update-dec} and~\ref{algo:ch-update-inc} operate in
$O(S_\Delta\cdot h + L_\Delta\cdot d_{\max})$ and
$O(S_\Delta\cdot h\cdot d_{\max} + L_\Delta\cdot d_{\max}^2)$, respectively.
\end{theorem}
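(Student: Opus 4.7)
The plan is to analyse each algorithm by partitioning its execution into (i) the call to the corresponding update-hierarchy maintenance routine, whose cost is absorbed by the bounds of the preceding theorem, and (ii) the two nested for-loops that maintain the labelling $L$. I will bound the latter by counting (a) the pairs $(v,i)$ that enter the priority queue $\mathcal{Q}$ and (b) the worst-case work done per such pair. In both cases, the shortcut loop is responsible for the $S_\Delta\cdot h$ (respectively $S_\Delta\cdot h\cdot d_{\max}$) term and the descendant loop is responsible for the $L_\Delta\cdot d_{\max}$ (respectively $L_\Delta\cdot d_{\max}^2$) term.

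For Algorithm~\ref{algo:ch-update-dec} I would first observe that the shortcut loop (Lines 4--8) traverses each of the $S_\Delta$ affected shortcuts once and inspects at most $h$ ancestor positions, giving $O(S_\Delta\cdot h)$ work. Because a pair is pushed only when $L_v[i]$ strictly decreases, every element of $\mathcal{Q}$ corresponds to an actually affected label entry, so the total number of processed pairs is bounded by $L_\Delta$. Processing each pair iterates over $N^-(v)$ with $|N^-(v)|\leq d_{\max}$, yielding $O(L_\Delta\cdot d_{\max})$ work for the descendant loop and hence $O(S_\Delta\cdot h + L_\Delta\cdot d_{\max})$ in total.

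For Algorithm~\ref{algo:ch-update-inc} the situation is subtler because the enqueueing tests (Lines 6 and 14) are equality conditions and may admit \emph{potentially} affected pairs that end up unchanged after recomputation. I would split the contents of $\mathcal{Q}$ by origin: Lines 4--7 contribute at most $S_\Delta\cdot h$ pairs, while Lines 13--15 contribute at most $d_{\max}$ pairs per actually affected entry, i.e.\ at most $L_\Delta\cdot d_{\max}$ pairs. Each pair then costs $O(d_{\max})$ to process (Lines 10--11 recompute $\omega_{new}$ by scanning $N^+(v)$, and Lines 12--15 scan $N^-(v)$). Multiplying the pair count by the per-pair cost yields the stated bound $O(S_\Delta\cdot h\cdot d_{\max} + L_\Delta\cdot d_{\max}^2)$.

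The main obstacle will be justifying that Lines 13--15 of Algorithm~\ref{algo:ch-update-inc} can only introduce fresh descendant pairs when processing a \emph{confirmed} affected entry, and that each affected entry is processed essentially once thanks to the $\tau$-ordering of $\mathcal{Q}$. For the decrease case, the corresponding argument rests on the monotone decrease of distances combined with the strict-inequality enqueue test. Once these invariants are in place --- they follow from the correctness analysis via Lemma~\ref{L:shortcut-chain} together with the priority-queue ordering --- the two counting arguments above yield the claimed complexity bounds directly.
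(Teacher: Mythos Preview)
Your proposal is correct and follows essentially the same approach as the paper's proof: bound the shortcut loop by $O(S_\Delta\cdot h)$ via the ancestor scan per affected shortcut, bound the descendant loop by charging each investigated entry to an affected label entry times its shortcut degree (yielding $O(L_\Delta\cdot d_{\max})$), and in the increase case multiply through by an extra $d_{\max}$ for the per-pair recomputation at Lines~10--11. The paper phrases the second-loop bound as ``triangles formed by an affected distance entry and a shortcut'' rather than ``$|\mathcal{Q}|\le L_\Delta$ times $d_{\max}$ work per pair,'' but these are the same count viewed from two sides, and your explicit splitting of $\mathcal{Q}$ by origin in the increase case makes the argument slightly more transparent than the paper's terse version.
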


\begin{proof}
Consider Algorithm~\ref{algo:ch-update-dec}.
The distance entries that are \emph{potentially} affected in the first for loop correspond to the affected shortcuts, plus distance entries pointing to ancestors of one endpoint of an affected shortcut.
The number of these is bounded by $O(S_\Delta\cdot h)$.
The second for loop investigates potentially affected distance entries that form a triangle with an affected distance entry and a shortcut; their number is thus bounded by $O(L_\Delta\cdot d_{\max})$.
For Algorithm~\ref{algo:ch-update-inc}, the distance entries being processed are similarly bounded, but processing each of them requires $O(d_{\max})$ time to compute the new distance value.
\end{proof}

Under the assumptions that $E_\Delta \le S_\Delta$ and $S_\Delta\cdot h \le L_\Delta\cdot d_{\max}$, which hold in practice, the above complexity bounds can be simplified to
\begin{align*}
\text{-- Algorithms~\ref{algo:dch-minus} and~\ref{algo:dch-plus}} &: O(S_\Delta\cdot d_{\max}) \text{ and } O(S_\Delta\cdot d_{\max}^2);\\
\text{-- Algorithms~\ref{algo:ch-update-dec} and~\ref{algo:ch-update-inc}} &: O(L_\Delta\cdot d_{\max}) \text{ and } O(L_\Delta\cdot d_{\max}^2).
\end{align*}

Note further that the extra $d_{\max}$ factors for the weight increase algorithms are due to weight calculations for shortcuts or distance entries where a shortest path is eliminated, but other paths of the same length remain.
In practice, such cases are rare and observed performance is much closer to the weight decrease case than the analysis suggests.
Tracking supports as for IncH2H could eliminate these factors from the theoretical bounds.

\begin{table}[ht]
\centering
\caption{Summary of datasets - 10 real-world road networks.}\vspace{-0.2cm}
\label{table:datasets}
\begin{tabular}{| l l | r r r|} 
    \hline
    Network & Region & $|V|$ & $|E|$ & Memory \\
    \hline\hline
    NY & New York City & 264,346 & 733,846 & 17 MB \\
    BAY & San Francisco & 321,270  & 800,172 & 18 MB \\
    COL & Colorado & 435,666 & 1,057,066 & 24 MB \\
    FLA & Florida & 1,070,376 & 2,712,798 & 62 MB \\
    CAL & California & 1,890,815 & 4,657,742 & 107 MB \\
    E & Eastern USA & 3,598,623 & 8,778,114 & 201 MB \\
    W & Western USA & 6,262,104 & 15,248,146 & 349 MB \\
    CTR & Central USA & 14,081,816 & 34,292,496 & 785 MB \\
    USA & United States & 23,947,347 & 58,333,344 & 1.30 GB \\ 
    EUR & Western Europe & 18,010,173 & 42,560,279 &  974 MB \\\hline
\end{tabular}
\end{table}
\begin{table*}[ht]
 \centering
 \caption{Comparing update times of our algorithms and the state-of-the-art algorithms in the batch and single update setting.}\vspace{-0.2cm}
 \label{table:update_time}
 \scalebox{0.93}{
 \begin{tabular}{| l || c c c c | c c c c || c c | c c |}  \hline

    \multirow{3}{*}{Network}&\multicolumn{8}{c||}{Batch Update Setting}&\multicolumn{4}{c|}{Single Update Setting}\\\cline{2-13}
    &\multicolumn{4}{c|}{Increase [ms]}&\multicolumn{4}{c||}{Decrease [ms]}&\multicolumn{2}{c|}{Increase [ms]}&\multicolumn{2}{c|}{Decrease [ms]} \\\cline{2-13}
    &\om$^+_p$&IncH2H$^+_p$&\om$^+$&IncH2H$^+$&\om$^-_p$&IncH2H$^-_p$&\om$^-$&IncH2H$^-$&\om$^+$&IncH2H$^+$&\om$^-$&IncH2H$^-$\\\hline\hline
    NY&0.209&0.234&0.790&2.900&0.116&0.187&0.522&2.006&1.190&4.069&0.847&3.021\\
    BAY&0.153&0.178&0.543&2.498&0.103&0.134&0.394&1.769&1.069&3.910&0.827&2.992\\
    COL&0.257&0.318&0.933&4.613&0.179&0.241&0.696&3.306&2.172&9.693&1.708&7.279\\
    FLA&0.311&0.390&1.906&4.981&0.216&0.320&1.368&3.585&2.546&5.703&2.010&4.531\\
    CAL&0.786&1.185&5.079&20.20&0.539&0.855&3.614&13.89&6.951&27.05&5.447&19.99\\
    E&1.913&2.481&12.20&43.57&1.314&1.820&8.197&29.33&13.22&58.97&9.967&42.28\\
    W&2.420&3.841&18.11&68.99&1.757&2.772&12.69&47.76&19.74&83.27&15.01&60.27\\
    CTR&8.721&15.13&58.72&309.7&5.570&10.75&38.48&213.1&59.56&334.3&42.65&233.2\\
    USA&9.321&18.20&73.59&356.3&6.004&13.06&49.29&239.8&71.19&349.7&51.41&240.1\\
    EUR&5.634&8.283&26.83&96.63&3.273&6.969&17.03&66.97&23.24&87.02&15.99&65.11\\\hline
 \end{tabular}}
\end{table*}
\begin{table*}[ht]
 \centering
 \caption{Comparing query times, labelling sizes and construction times of our~\om with the state-of-the-art method~IncH2H.}\vspace{-0.2cm}
 \label{table:performance_time}
 \scalebox{0.94}{
 \begin{tabular}{| l || c c || r r | r r || r r || r r|}  \hline
	\multirow{2}{*}{Network}&\multicolumn{2}{c||}{Query Time [$\mu$s]}&\multicolumn{2}{c|}{Labelling Size}&\multicolumn{2}{c||}{Shortcuts Size}&\multicolumn{2}{c||}{Const. Time [s]}&\multicolumn{2}{c|}{Affected Labels $L_{\Delta}$ (Million)} \\\cline{2-11}
    & \om & \textsc{IncH2H} & \om & IncH2H & \om & IncH2H & \om & IncH2H & \om & IncH2H \\

   \hline\hline
    NY & 0.287 & 0.913 & 130 MB & 826 MB & 15 MB & 42 MB & 2 & 4&8/31 (0.26)&40/99 (0.40) \\
    BAY & 0.299 & 0.841 & 105 MB & 797 MB & 12 MB & 40 MB & 2 & 3&6/24 (0.25)&43/94 (0.46) \\
    COL & 0.349 & 1.018 & 176 MB & 1.35 GB & 15 MB & 51 MB & 4 & 5&13/41 (0.32)&96/166 (0.58) \\
    FLA & 0.396 & 1.019 & 425 MB & 2.38 GB & 40 MB & 129 MB & 10 & 11&15/97 (0.15)&58/283 (0.20) \\
    CAL & 0.490 & 1.333 & 1.03 GB & 8.12 GB & 73 MB & 233 MB & 25 & 30&42/252 (0.17)&297/1,023 (0.29) \\
    E & 0.630 & 1.683 & 2.92 GB & 20.5 GB & 136 MB & 444 MB & 64 & 74&92/736 (0.13)&864/2,627 (0.33) \\
    W & 0.664 & 1.702 & 4.83 GB & 36.0 GB & 231 MB & 758 MB & 107 & 126&119/1,210 (0.10)&3,430/4,595 (0.75) \\
    CTR & 0.812 & 2.483 & 19.7 GB & 177 GB & 558 MB & 1.77 GB & 455 & 858&331/5,092 (0.07)&3,604/23,250 (0.16) \\
    USA & 0.834 & 3.428 & 35.6 GB & 307 GB & 931 MB & 2.97 GB & 710 & 1,081&458/9,206 (0.05)&944/40,220 (0.02) \\ 
    EUR & 1.185 & 3.888 & 36.4 GB & 320 GB & 733 MB & 2.38 GBff & 907 & 1,254&157/9,511 (0.02)&567/42,300 (0.01) \\\hline
 \end{tabular}}
\end{table*}
\section{Experiments}
We conducted experiments to verify the effectiveness of our proposed solution. All the experiments were performed on a Linux server Intel Xeon W-2175 with 2.50GHz CPU, 28 cores, and 512GB of main memory. All the algorithms were implemented in C++20 and compiled using g++ 9.4.0 with the -O3 option. Distance results are exact for all methods considered, and correctness has been verified using Dijkstra. 

\vspace{0.1cm}
\noindent\textbf{Datasets.}
We use 10 undirected real-world road networks. Nine of these road networks are from the US and publicly available at the webpage of the 9th DIMACS Implementation Challenge \cite{demetrescu2009shortest}, while the other one is from Western Europe managed by PTV AG \cite{ptvplanung}. Table \ref{table:datasets} summarises these datasets where the largest dataset is the whole road network in the USA. 

\vspace{0.1cm}
\noindent\textbf{Baselines.}
We compare our algorithms with the state-of-the-art method IncH2H \cite{zhang2022relative} for distance queries on dynamic road networks: We use IncH2H$^+$, IncH2H$^-$ and IncH2H$^+_p$, IncH2H$^-_p$ to denote the sequential and parallel version of IncH2H for edge weight increase and decrease, respectively. We do not consider DCH~\cite{ouyang2020efficient} in our comparison because their query performance is generally orders of magnitude slower than the hub-based labelling methods. Further, we omit the comparison with DTDHL~\cite{zhang2021dynamic} since it has been significantly outperformed by IncH2H. The code for IncH2H was kindly provided by their authors and implemented in C++. We select the balance partition threshold $\beta = 0.2$ and set the number of threads to 28, the number of available cores.

\subsection{Performance Comparison}\label{performance}

\noindent\emph{\underline{Update Time.}~} We randomly sampled 10 batches for each network, where each batch contains 1,000 updates. Then, for each update $(a, b, \omega)$ in a batch, we increase its weight to $2.0 \times \omega$ and maintain the labelling using algorithms for weight increase and decrease (restore) its weight to original (i.e., to $\omega$) and maintain the labelling using algorithms for weight decrease. We process updates in both batch update setting and single update setting (one by one) and report the average update time over 10 batches in Table~\ref{table:update_time}.

Table~\ref{table:update_time} shows that our algorithms \om$^+$ and \om$^-$ significantly outperform IncH2H$^+$ and IncH2H$^-$. In particular, \om$^+$ and \om$^-$ are about 3-4 times faster in update time compared to IncH2H$^+$ and IncH2H$^-$. The reason behind our outstanding performance is two-fold. Firstly, $\hu$ is constructed based on an order induced by $\hq$ which exploits structure of road networks via minimal balanced cuts. This ensures that \om\ contains fewer labels than IncH2H to begin with.
Secondly, unlike IncH2H whose labels contain distances in $G$, \om\ stores distances to ancestors within induced subgraphs. This further reduces the search space when updating distance entries.
Table~\ref{table:performance_time} shows the difference in number of affected labels updated by \om\ and IncH2H, respectively.
E.g. for NY 8 out of 31 million DHL label entries (26\%) had their distance value changed.
We observe that the fraction of affected labels ($L_{\Delta}/|L|$) tends to be smaller for \om~than for IncH2H, which can be attributed to the reduced search space. We do not report results w.r.t. $E_{\Delta}$ and $S_{\Delta}$ as the cost to maintain $G$ and $\hu$ is negligible compared to $L$. The parallel variants of our algorithms \om$^+_p$ and \om$^-_p$ also significantly outperform IncH2H$^+_p$ and IncH2H$^-_p$ on all datasets.

\begin{figure*}[ht!]
\centering
\includegraphics[width=0.95\textwidth]{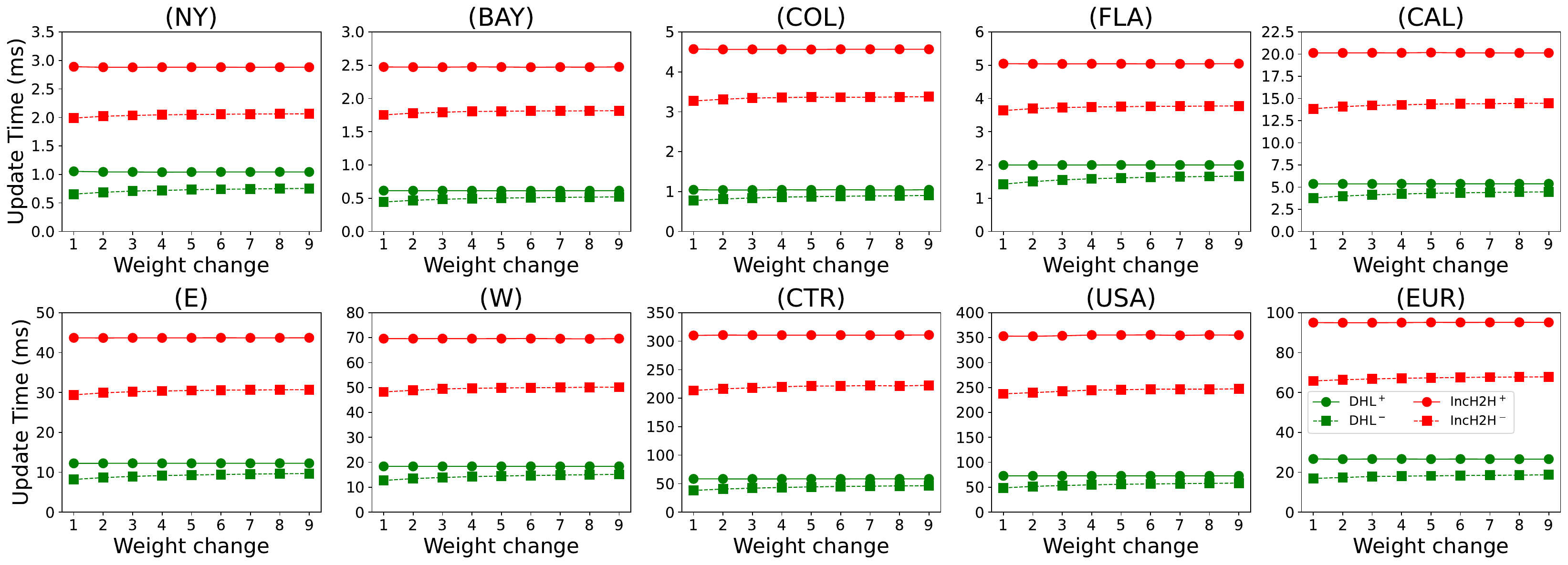}\vspace{-0.3cm}
\caption{Maintenance performance under varying edge weights for both decrease and increase case on all datasets.}
\label{fig:varying_weights}
\includegraphics[width=0.95\textwidth]{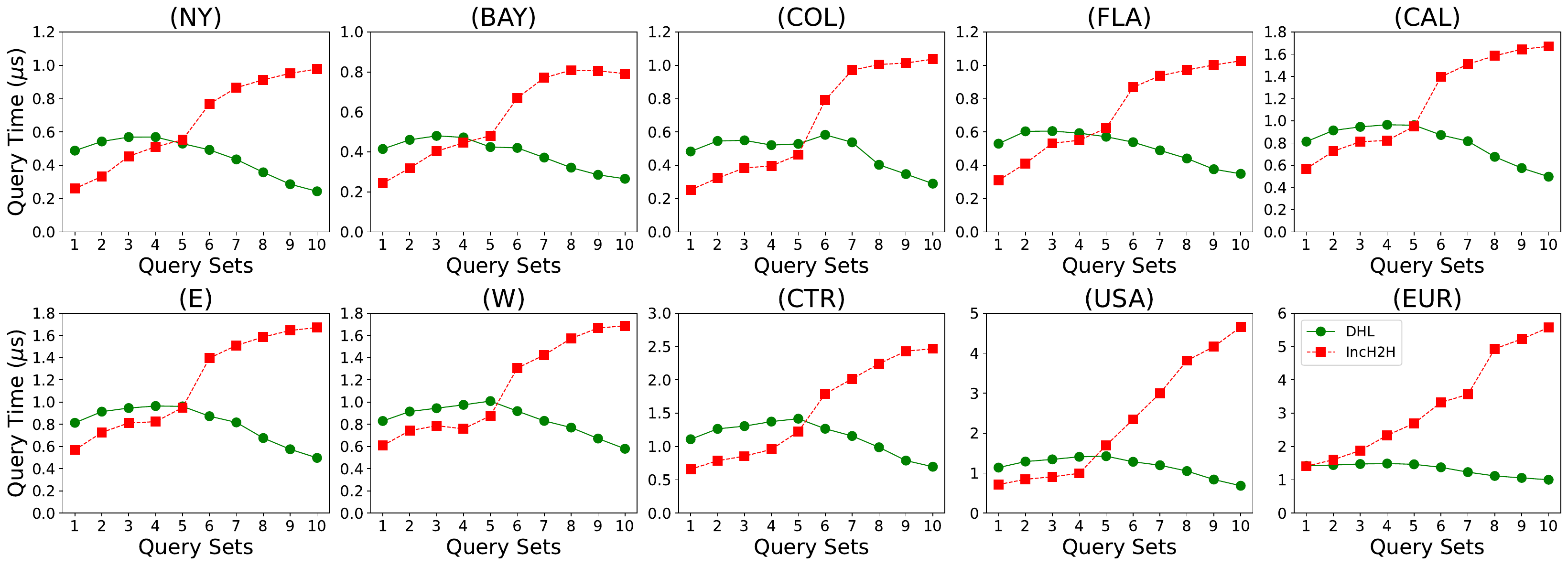}\vspace{-0.3cm}
\caption{Query performance for 10 sets of query pairs with varying distances on all datasets.}
\label{fig:query}\vspace{-0.1cm}
\end{figure*}

\vspace{0.1cm}
\noindent\emph{\underline{Query Time.}~}
We randomly sample 1,000,000 pairs of vertices from each network.  Table~\ref{table:performance_time} reports the average query time over 1 million query pairs. \om~clearly outperforms IncH2H, i.e., about 3 times faster over all road networks. The reason is that \om~ produces significantly smaller labelling size compared to IncH2H. Smaller labels allow \om~to use caching more effectively. Further, \om~processes a significantly smaller number of distance entries related to common ancestors in the labels of a query pair.

We also evaluate the query performance using query pairs with varying distances, similar to \cite{ouyang2018hierarchy,Pohl1969BidirectionalAH}. We generate 10 sets of query pairs $Q_1, Q_2,\dots, Q_{10}$ for each network. Let $x = (\frac{l_{max}}{l_{min}})^{1/10}$, where $l_{min}=\text{1,000}$ and $l_{max}$ is the maximum distance of any query pair. For $\forall1\leq i\leq10$, we generate 10,000 queries in each set $Q_i$, where their distances are in the range $(l_{min} \cdot x^{i-1},\; l_{min} \cdot x^{i}]$. We report the average query time for all road networks in Figure~\ref{fig:query}. We can see that \om~significantly outperforms IncH2H for long distance pairs. Long distance pairs have a significantly small number of common ancestors because their lowest common neighbor generally lies at higher levels of a hierarchy. For short distance pairs, DHL often examines more hops compared to IncH2H because the lowest common ancestor (LCA) lies at lower levels of the query hierarchy, resulting in more ancestors to search in the labels. Despite this, DHL performs comparably on almost all road networks for short distance queries. This is due to two factors: (1) the hops are stored in a continuous memory block, allowing for efficient access, and (2) our data structure enables faster computation of LCA.

\vspace{0.1cm}
\noindent\emph{\underline{Labelling Size.}~}~ We also compare the memory consumed by \om~and the state-of-the-art method IncH2H. Table~\ref{table:performance_time} shows that our method \om~requires significantly less memory than IncH2H. In particular, the labelling size of \om~is about 9 times smaller than IncH2H on the largest three datasets. This is because \om~is constructed based on a vertex partial order from $\hq$. Since $\hq$ is generated using a partitioning technique which produces small and minimal cuts to partition a graph, our distance labels store a smaller number of label entries than IncH2H.
Additionally, IncH2H employs auxiliary data structures to speed up maintenance and ensure strong theoretical bounds, such as support, which inflate memory requirements further.
We also compared the memory sizes of both methods for storing shortcuts, and find that IncH2H uses about 3 times more memory than \om, for similar reasons.

\begin{figure*}[ht]
\centering
\includegraphics[width=0.95\textwidth]{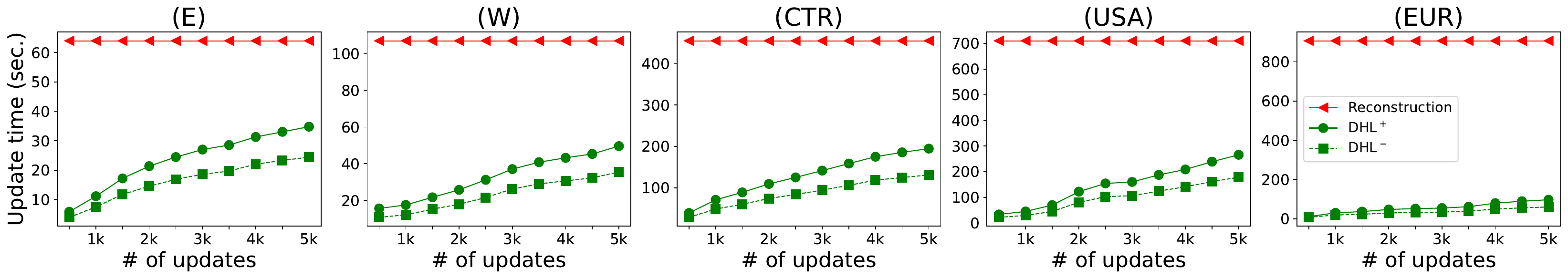}\vspace{-0.3cm}
\caption{Maintenance performance against reconstruction time, evaluated using batches of updates of varying sizes.}\vspace{-0.3cm}
\label{fig:effi}
\end{figure*}
\subsection{Update Time with Varying Weights}
Following~\cite{ouyang2020efficient,zhang2022relative}, we randomly sampled 9 batches, each containing 1,000 edges. We first increase weights of updates $(a, b, \omega)$ in batch $t$ to $(t+1)\times\omega$ and then restore their weights to original i.e., $\omega$ to test the performance of weight increase and decrease case, respectively. Figure~\ref{fig:varying_weights} shows the average update time of our algorithms \om$^+$, \om$^-$ and the state-of-the-art algorithms IncH2H$^+$, IncH2H$^-$.  
We can see that \om$^+$ and \om$^-$ consume significantly less time to update the labelling compared to IncH2H$^+$ and IncH2H$^-$. This is because \om~has significantly reduced labelling sizes compared to IncH2H. As a result, the number of affected labels maintained by our algorithms are much smaller than IncH2H, as shown in Table~\ref{table:performance_time}. 

\subsection{Scalability Test}
We test the scalability of our methods \om$^+$ and \om$^-$ by randomly sampling 5,000 updates for each network and processing them in batches of sizes $\{5, 10$ $ 15, 20, 25, 30, 35, 40, 45, 50\}\times 10^2$. Figure \ref{fig:effi} presents the results for the large 6 datasets E, W, CTR, USA and EUR, while the results for other datasets are omitted as their trends are similar. We process each group with weight increase first and then with weight decrease, which is compared with the time taken by \om~ to construct the labelling from scratch. We see that, even for the largest group with 5,000 updates, the update times of \om$^+$ and \om$^-$ are significantly less than reconstruction.
\section{Extensions}\label{label:extensions}
We discuss how our solution can be extended to directed graphs and to edge/node insertions/deletions, and the boundedness.

\vspace{0.1cm}
\noindent\emph{\underline{Directed Road Networks.}~}
Our method can be easily extended to the directed version of dynamic road networks. We can create \emph{forward} and \emph{reverse} labels for each vertex $v\in V(G)$ to store distances from both directions when constructing the labelling $L$. This may be carried out by running Algorithm~\ref{algo:label-construct} for both forward and reverse directions. Then, we can use \om$^-$ and \om$^+$ twice to maintain $L$, once on the forward labels and again on the backward search. For directed versions of dynamic road networks, existing labelling-based methods require increased memory to store precomputed labels. However, road networks are often nearly undirected, with a few notable exceptions (e.g., Stockholm). In such cases,  two distances stored within each label are often identical. This raises two intriguing possibilities for future research: (1) how to exploit this symmetry to reduce the size of labels in directed road networks, and (2) how to avoid performing separate distance calculations for each direction during both the construction and updating of labels.

\vspace{0.1cm}
\noindent\emph{\underline{Edge/Vertex Insertion/Deletion.}~}
In practice, new roads are rarely created and old roads are rarely deconstructed. Thus, the structure of a road network is considered to be intact. As a result, structural changes such as edge/vertex insertion and deletion are extremely infrequent on road networks. Previous studies address scenarios related to such changes \cite{zhang2022relative,ouyang2018hierarchy,zhang2021dynamic}. Similarly, our algorithms consider such changes in the context of \om~as follows.
Edge deletions can be handled by increasing the weights of the deleted edges to $\infty$ and similarly a vertex deletion can be handled by increasing the weights of its adjacent edges to $\infty$. {For edge insertions, we first update the query hierarchy $\hq$ by identifying the largest affected induced subgraph and recursively repartitioning it to fix the affected nodes in $\hq$. Then, based on the ordering induced by $\hq$, we fix the update hierarchy $\hu$ by invoking the algorithm of \cite{ouyang2020efficient}. Accordingly, we compute new labels of vertices using Algorithm~\ref{algo:label-construct}.

\vspace{0.1cm}
\noindent\emph{\underline{Boundedness.}~}
Compared to IncH2H$^-$ and IncH2H$^+$ which are \emph{relative bounded} and \emph{relative subbounded}~\cite{zhang2022relative}, respectively, \om$^-$ is \emph{relative bounded} and \om$^+$ is  not \emph{relative subbounded} because Algorithm~\ref{algo:ch-update-inc} may recompute label entries if the length of a shortest path increases, even if other shortest paths of the same length remain. IncH2H addresses the challenge by tracking the \emph{support} of labels — a measure conceptually similar to the number of shortest paths — and only updating label entries when the support drops to zero. While this approach reduces unnecessary recomputations, it also increases the index size and adds extra maintenance time. This trade-off would typically be beneficial in cases where there are a large number of unnecessary label recomputations. However, in road networks, shortest paths are often unique, resulting in a low fraction of unnecessary label recomputations. To avoid overhead, we chose not to maintain additional support information. Our experiments confirmed this decision: most distance recomputations were necessary for the networks analyzed.
\vspace{-0.2cm}\section{Conclusion}
We critically examined the limitations of existing state-of-the-art solutions for efficiently answering distance queries on dynamic road networks. To address these limitations, we propose a novel solution called Dual-Hierarchy Labelling (DHL). Our solution integrates two hierarchies of distinct yet complementary data structures designed to enhance query efficiency while minimizing the maintenance time required for label updates.
DHL leverages these hierarchies to balance the trade-off between fast query response times and efficient update processes, which is a persistent challenge in dynamic settings. The experimental results validate the effectiveness of DHL: our approach consistently produces smaller index sizes and achieves faster construction times compared to current state-of-the-art methods. These results highlight the practical value of DHL in real-world applications, demonstrating its potential for scalable and efficient deployment in large-scale, dynamic road networks. This work sets the stage for further exploration into hybrid data structures that can optimize both query speed and maintenance efficiency in increasingly complex network environments.

\vspace{0.2cm}
\begin{acks}
This research was supported partially by the Australian Government through the Australian Research Council's Discovery Projects funding scheme (project DP210102273).
\end{acks}

\bibliographystyle{ACM-Reference-Format}
\balance
\bibliography{references}

\end{document}